\newcolumntype{Y}{>{\centering\arraybackslash}X}
\newcommand{\average}[1]{\overline{#1}}
\newcommand{\avgest}[1]{\estimate{\average{#1}}}
\newcommand{\altparameter}{\vartheta}
\newcommand{\classK}{\mathcal{K}}
\newcommand{\continuous}{\set{C}}
\newcommand{\costfunction}{{J}}
\newcommand{\differentialoperator}[1]{D^{#1}}
\newcommand{\derivativevalue}{\xi}
\DeclareMathOperator{\diagonal}{diag}
\newcommand{\ditheramplitude}{a}
\newcommand{\dithersignal}{p}
    \newcommand{\extendeddithersignal}{\rho}
\newcommand{\ditherfrequency}{\omega}
\newcommand{\error}[1]{\widetilde{#1}}
\newcommand{\estimate}[1]{\hat{#1}}
\newcommand{\gradient}{{g}}
\newcommand{\hessian}{{H}}
\newcommand{\integernumbers}{\mathbb{Z}}
    \newcommand{\nonnegativeintegernumbers}{\integernumbers_{\geq 0}}
\newcommand{\invhessian}{{\Gamma}}
\newcommand{\multiindex}{\alpha}
    \newcommand{\multiindexsize}{c}
\newcommand{\parameter}{\theta}
\newcommand{\realnumbers}{\mathbb{R}}
\newcommand{\rmd}{{d}}% {{\rm d}}
\newcommand{\set}[1]{\mathcal{#1}}
\DeclareMathOperator{\trace}{tr}
\newcommand{\ball}[2][]{\set{B}^{{#1}}_{{#2}}}
\newcommandx{\derivative}[3][1={}, 2={}]{\frac{{\rmd^{#2} #1}}{{\rmd #3^{#2}}}}
    \newcommand{\derivativefilter}{h}
\newcommand{\angularvelocity}{\omega}
\newcommand{\commanded}[1]{{#1}_{com}}
\newcommand{\forwardvelocity}{v}
\newcommand{\headingangle}{\theta}
    \newcommand{\sensorangletosource}{\beta}
\newcommand{\pressure}{p}
\newcommand{\reference}[1]{{#1}_{\mathrm{ref}}}
\newcommand{\demodulationsignal}{h}
\newcommand{\microphoneangle}{\phi}
\newcommand{\state}{x}
    \newcommand{\altstate}{z}
\newcommand{\gain}[1]{k_{#1}}
\newcommand{\period}[1]{#1}
\newcommand{\perturbation}[1]{{#1}}
\newcommand{\covariance}{Q}
    \newcommand{\crossvariance}{R}
\DeclareMathOperator{\suchthat}{s.t.}
\DeclareMathOperator{\halfvect}{vech}
    \newtheorem{thm}{Theorem}
    \newtheorem{prop}[thm]{Proposition}
    \newtheorem{asmp}{Assumption}
    \newtheorem{rem}{Remark}
\tikzset{%
    block/.style = {
        draw, thick, rectangle, minimum height=3em, minimum width=3em
    },
    sum/.style = {draw, circle,minimum size=1cm}, % Adder
    convolve/.style = {
        inner sep=0,
        outer xsep=-1pt,
        outer ysep=0,
        scale=2
    },
    input/.style    = {coordinate}, % Input
    output/.style   = {coordinate}, % Output
    gain/.style = {
      	draw, 
        isosceles triangle,
        isosceles triangle apex angle=50,
        minimum height = 3.0em,
        outer sep=0
    },
}
\begin{document}
% title for the journal
\title{
    Servos for Local Map Exploration Onboard Nonholonomic Vehicles for Extremum Seeking
    % A Further Unifying Derivative Estimation Scheme for Extremum Seeking Control
}

% list the authors for the paper rank as first author and so on
\author{Dylan James-Kavanaugh$^{1,\ast}$, Patrick McNamee$^{1,\ast}$, Qixu Wang$^{1}$, and Zahra Nili Ahmadabadi$^{1}$

% this shows up at the end of the first page which list the authors name with their emails and recognition of the research funding through the naval research
% \thanks{$\ast$ Dylan James-Kavanaugh and Patrick McNamee are the main contributors}
\thanks{
    Research was sponsored by the Army Research Office under Grant
    Number W911NF-24-1-0386 and the Department of the Navy, Office of Naval Research
    under ONR award number N000142412269. The views, findings, conclusions, or recommendations contained in this document are those of the authors and should not be interpreted as representing the official policies or views, either expressed or implied, of the Army Research Office, the Office of Naval Research, or the U.S. Government. The U.S. Government is authorized to reproduce and distribute reprints for Government purposes notwithstanding any copyright notation herein.
}
    \thanks{
        $^{1}$ D. James-Kavanaugh, P. McNamee, Q. Wang, and Z. N. Ahmadabadi are with the Department of Mechanical Engineering, San Diego State University, San Diego, CA, USA
                \{{\tt\small djameskavanaug2312, \tt\small pmcnamee5123}, {\tt\small qwang0429}, and {\tt\small zniliahmadabadi}\} {\tt\small @sdsu.edu }}%
    \thanks{
        $^{\ast}$ These authors contributed equally as joint first authors.
    }
}

\markboth{Journal of \LaTeX\ Class Files,~Vol.~18, No.~9, September~2025}%
{How to Use the IEEEtran \LaTeX \ Templates}

\maketitle

\begin{abstract}
    Extremum seeking control (ESC) often employs perturbation-based estimates of derivatives for some sensor field or cost function. These estimates are generally obtained by simply multiplying the output of a single-unit sensor by some time-varying function. Previous work has focused on sinusoidal perturbations to generate derivative estimates with results for arbitrary order derivatives of scalar maps or higher up to third-order derivatives of multivariable maps.  This work extends the perturbations from sinusoidal to bounded periodic or almost periodic functions and considers multivariable maps. A necessary and sufficient condition is given for determining if time-varying functions exist for estimating arbitrary order derivatives of multivariable maps for any given bounded periodic or almost periodic dither signal. These results are then used in a source seeking controller for a nonholonomic vehicle with a sensor actuated by servo. The conducted simulation and real-world experiments demonstrate that by distributing the local map exploration to a servo, the nonholonomic vehicle was able to achieve a faster convergence to the source.
\end{abstract}

\begin{IEEEkeywords}
    Extremum Seeking Control, Optimization, Source Seeking
\end{IEEEkeywords}

\section{Introduction}
    \label{sec:introduction}
    
% Introduction
Extremum seeking control (ESC) is a family of continuous-time optimization algorithms which drive estimates of optimal parameters for some map to an extremum. Most ESCs will act in a model-free way, that is to say, they do not have access to the derivatives but rather estimate them by multiplying a sensor output by a time-varying function. One of the first ESCs can be traced back to Leblanc in the 1920s \cite{ref:leblanc-1922}, where it was introduced in the context of automatic tuning of electrical machinery. The technique gained prominence in the 1950s and 1960s \cite{ref:tan-2010}, with numerous studies exploring adaptive search methods to locate optimal operating points in aerospace and industrial processes. Since then, ESCs have undergone a resurgence when the stability of ESC was rigorously proven with a feedback scheme \cite{ref:krstic-2000} and pushed the interest in the direction it is today. Contemporary research shows its applicability in a wide range of fields, such as automotive applications \cite{ref:drakunov-2002} and autonomous robot navigation \cite{ref:zhang-2007, ref:durr-2017}. The design of ESCs remains similar in previous literature, perturbations are added to the estimates of optimal parameters, and these estimates are updated based on the output of the perturbed map \cite{ref:nesic-2006, ref:nesic-2010-2}. By adjusting the optimal parameter estimates based on these estimated derivatives, the controller achieves a practical stability to an extremum. For nonholonomic vehicles, the perturbation is generated by the vehicle's motion itself, i.e.,\ the vehicle is responsible for both tasks of local exploration as well as adjusting the optimal parameters.

Sinusoidal perturbation signals have dominated the previous literature. Ne{\v s}i{\'c} \emph{et al.} \cite{ref:nesic-2010-2} was the first to give a generalized framework for estimating derivatives of arbitrary order based on sinusoidal perturbation signals for scalar maps. References \cite{ref:mills-2018} and \cite{ref:rusiti-2016} gave explicit perturbation signals for higher order derivatives of a scalar map using sinusoidal signals, although we show in Appendix that these results can actually be derived from \cite{ref:nesic-2010-2}. In contrast to the scalar case, multivariable maps have received comparatively less attention, although \cite{ref:ghaffari-2012} showed the rules required for sinusoidal perturbations to estimate second-order derivatives of multivariable maps and later \cite{ref:ghaffari-2024} gave the rules for estimating the third-order derivatives. These rules are based on the relative dither rates of the individual element perturbation signals comprising the overall vector perturbation signals, and the number of rules grow fast for higher and higher order derivatives required. For instance, only one rule is required for estimating the first-order derivatives, 5 rules for the second-order derivatives \cite{ref:ghaffari-2012}, and 29 rules for the third-order derivatives \cite{ref:ghaffari-2024}. 

Perturbation signals other than the sinusoidal signals has been in previous literature. Tan \emph{et al.} \cite{ref:tan-2007} has shown that the choice of the perturbation signal, including non-sinusoidal signals, directly influences the performance of gradient-based extremum seeking control (GESC) on scalar maps. Additionally, \cite{ref:haring-2023}  used non-sinusoidal perturbation signals, all periodic with a common period, for gradient-based approaches on multivariable maps. Our work extends these findings by considering any periodic or almost-periodic signals for the purpose of estimating derivatives of arbitrary order in multivariate maps. We show a necessary and sufficient condition for determining if a periodic or almost-periodic perturbation signal vector has time-varying functions which can be used to estimate the derivatives of a multivariable map, even if the element-wise signals themselves are not orthogonal. Additionally, we demonstrate the importance of our results by utilizing it for implementing the novel concept of using a servo to perturb a sensor onboard a novel nonholonomic vehicle for local map exploration for derivative estimations. In doing so, we show that this implementation outperforms the state-of-the-art source seeking in both simulations and real-world experiments.
 
\textit{Main Contribution:} 

This work builds on the existing research of \cite{ref:nesic-2010-2, ref:mills-2018, ref:ghaffari-2012} and \cite{ref:tan-2007} to develop a comprehensive method to determine whether perturbation-based estimation of arbitrary $m$-th ordered derivatives is possible when using specified periodic or almost-periodic dither signals. We give a necessary and sufficient condition for the existence of time-varying functions which can be multiplied by the sensor output to achieve the arbitrary order derivative estimates. These signals allow for the average derivative estimates to converge to the true derivative values for all cost functions at least $m$-times differentiable. Additionally, we demonstrate that, when utilizing these results, one can design a controller for a nonholonomic vehicle which has better convergence times than controllers from previous literature where the vehicle was simultaneously in charge of local exploration and moving towards the extremum.

\textit{Notation:}

Multi-index notation is extensively used in this work. A multi-index $\multiindex$ is an $n$-tuple of nonnegative integer entries ($\multiindex\in\nonnegativeintegernumbers^n)$. The sum of $\multiindex$ is $\vert \multiindex \vert  = \sum_{i=1}^n \multiindex_i$ and the factorial is defined similarly as $\multiindex! = \prod_{i=1}^n \multiindex_i!$. A vector $x\in\realnumbers^n$ can be given as a power $x^\multiindex = \prod_{i=1}^n {x_i}^{\multiindex_i}$. Additionally, multi-index can be used to compactly express the $m$-th order Taylor series expansion of a function $f:\realnumbers^n\to\realnumbers$ at the point $x_0\in\realnumbers^n$ as
\begin{equation}
    f(x) = \sum_{\vert\multiindex\vert = 0}^{m} \frac{1}{\multiindex!} \differentialoperator{\multiindex}f(x_0)(x - x_0)^{\multiindex}
\end{equation}
where $\differentialoperator{\multiindex} = \left(\frac{\partial}{\partial x_1}\right)^{\multiindex_1}\cdots \left(\frac{\partial}{\partial x_n}\right)^{\multiindex_n}$. We assume that there is a set $\set{A}$ that contains all $\multiindex$ that we will consider for the ESCs.  For brevity, we will assume that $\set{A}$ has some implicit ordering so that, from now on, $\multiindex_i$ is the $i$-th element of $\set{A}$ rather than the $i$-th element of the $n$-tuple.

%% Problem statement
\section{Problem Statement}
    \label{sec:problem-statement}

The general goal of ESCs is to find the extrema of a cost function $\costfunction(\parameter)$. Traditionally, ESCs expand $\parameter$ into an additive combination of the estimate of the optimal parameter $\estimate{\parameter}$ and a local perturbation/dither signal $\dithersignal(\ditherfrequency t)$, i.e. $\parameter = \estimate{\parameter} + \ditheramplitude  \dithersignal(\ditherfrequency t)$. The small perturbations ($\ditheramplitude \ll 1$) occur on a fast time scale ($\ditherfrequency \gg 1$) while the estimated optimal parameter is updated on a slow time scale by differential equations
\begin{equation}
    \label{eq:problem-statement:original-x-ode}
    \derivative{t}\estimate{\state} = f(\omega t, \estimate{\state}, \ditheramplitude)
\end{equation}
where $\estimate{\state} \in \realnumbers^{n+n'}$ is the augmented state vector containing $n$ parameters and $n'$ auxiliary states which often estimate the derivative information. Typically, these systems are affine with respect to the derivative estimates, so we can use the affine dynamics of
\begin{equation}
    \label{eq:problem-statement:parameter-update-ode:affine}
    \derivative{t} \estimate{\state} = f_0\left(\estimate{\state}\right) + f_1\left(\estimate{\state}\right) \estimate{\derivativevalue}\left(\omega t, \estimate{\parameter}, \ditheramplitude\right)
\end{equation}
where $\estimate{\derivativevalue}\in\realnumbers^{p}$ is a vector of the estimates of the derivatives $\derivativevalue_i(\estimate{\parameter}) = \differentialoperator{\multiindex_i}\costfunction\left(\estimate{\parameter}\right)$. To better illustrate how often ESCs with affine dynamics appear in the literature, we give some examples of ESCs with differential equations in this form.

\subsection{Examples of Affine Extremum Seekers}

\subsubsection{Gradient Descent \cite{ref:zhang-2007}} 
The Gradient-based Extremum Seeking Control is the baseline ESC design. It relies on estimating the gradient of a scalar objective function with respect to tunable parameters through periodic perturbations and demodulation. The parameter estimate update dynamics are given by
\begin{equation}
    \derivative{t} \estimate{\parameter} = - \gain{} \estimate{\gradient}(\ditherfrequency t, \estimate{\parameter}, \ditheramplitude),
\end{equation}
where $\estimate{\state}=\estimate{\parameter}$ as there are no auxiliary states and $\estimate{\gradient} = \estimate{\derivativevalue}$ is the perturbation-based estimate of the gradient $\nabla\costfunction$. This system can be expressed in the form of \eqref{eq:problem-statement:parameter-update-ode:affine} with $f_0(\estimate{\parameter})~=~0$ and $ f_1(\estimate{\parameter}) = -\gain{}$.

\subsubsection{Heavy-ball \cite{ref:michalowsky-2014}} 
The Heavy-ball ESC has the dynamics of
\begin{align}
    \derivative{t} \estimate{\parameter} & {} = {} \phi \\
    \derivative{t} \phi & {} = {} -\beta \phi - \gain{} \estimate{\gradient}(\ditherfrequency t, \estimate{\parameter}, \ditheramplitude)
\end{align}
where the state vector $\estimate{x} = [\estimate{\parameter}^T, \phi^T]^T$ is augmented with the auxiliary state $\phi \in \realnumbers^n$ and $\beta \in \realnumbers$ is the dampening term. This system can be rewritten with affine dynamics using $f_0(\estimate{x})~=~[\phi^T,~-\beta \phi^T]^T$ and $f_1(\estimate{x}) = [0,\ -\gain{}I_{n\times n}]^T$ where $I_{n\times n}$ is the $n\times n$ identity matrix.

\subsubsection{Newton Method \cite{ref:ghaffari-2012,ref:mcnamee-2025:scalar-newton-automatica,ref:mcnamee-2024:scalar-maps}} 
The Newton-based ESC (NESC) corrects for second-order derivatives $\nabla^2\costfunction$ by estimating the inverse of the Hessian. The dynamical system is given in \cite{ref:ghaffari-2012} as
\begin{align}
    \derivative{t}\estimate{\parameter} &= -\gain{}\estimate{\invhessian}\estimate{\gradient}\left(\ditherfrequency t, \estimate{\parameter}, \ditheramplitude\right) \\
    \derivative{t}\estimate{\invhessian} &= \omega_{\ell}\left(\estimate{\invhessian} - \estimate{\invhessian}\, \estimate{\hessian}(\ditherfrequency t, \estimate{\parameter}, \ditheramplitude)\,\estimate{\invhessian}\right)
\end{align}
where $\estimate{\invhessian}$ is the estimate of the inverse of the Hessian and $\estimate{\hessian}$ is the perturbation-based estimate of the Hessian. As it stands, the NESC system is a combination of a vector differential equation and a matrix differential equation, so its initially unclear how this can be expressed as an affine vector differential equation. However, the system can be expressed as
\begin{multline}
    \derivative{t} \begin{bmatrix}
        \estimate{\parameter}\\
        \halfvect (\estimate{\invhessian})
    \end{bmatrix} = \begin{bmatrix}
        0\\
        \omega_{\ell} \halfvect (\estimate{\invhessian}) 
    \end{bmatrix} +\\ \begin{bmatrix}
            -k \estimate{\invhessian} & 0\\ 0 & 
            \omega_{\ell}L_n(\estimate{\invhessian} \otimes \estimate{\invhessian}) D_n
        \end{bmatrix} \begin{bmatrix}
            \estimate{\gradient}(\ditherfrequency t, \estimate{\parameter}, \ditheramplitude)\\
            \halfvect (\estimate{\hessian}(\ditherfrequency t, \estimate{\parameter}, \ditheramplitude))
        \end{bmatrix},
\end{multline}
where $\otimes$ is the Kronecker product, $\halfvect$ is the half-vectorization operator, and the matrices $L_n$ and $D_n$ are elimination and duplication matrices to account for the symmetry of the symmetric matrices. In this form, the state vector of the NESC exists in $\estimate{\state}\in\realnumbers^{n + n(n+1)/2}$. 

\subsection{Averaging of Affine Extremum Seekers}

\begin{figure}[t!]
    \centering
    \resizebox{0.45\textwidth}{!}{
    % block diagram for general estimation 
\begin{tikzpicture}[auto, node distance=2cm,>=Latex]
    %---------------------------------------------------------------------------%
    % cost function node
    \node[text width=13ex,align=center] at (0,0)[block] (cost) {Nonlinear $\costfunction(\parameter)$};
    % input node for the parameters
    \node [input] (input) {};
    \node at (-4,0) (parameters) {};
    % y output 
    \node [output] (output) {};
    \node at (3,0) (y-output) {};

    % sum node which is the convolutional filter
    \node at (2,-2) [convolve] (convolution) {$\otimes$};

    % input node for the demodulation signal
    \node at (0,-2) (demodulation) {};
    % output node for the gradient estimate
    \node at (0,-4) (gradient-estimate) {};
    
    \draw[->] (parameters) -- node {$\parameter =\estimate{\parameter} + \ditheramplitude \perturbation{p}(\ditherfrequency t)$}(cost);
    \draw[->] (cost.east) --  node[pos=0.9,above] {$y$} (y-output);
    \draw[->] (cost.east) -| (convolution.north);
    \draw[->] (demodulation) --  node[pos=0.1,left] {$\demodulationsignal(\ditherfrequency t,\ditheramplitude)=\begin{bmatrix}
        \demodulationsignal_1(\ditherfrequency t,\ditheramplitude)\\
        \vdots\\
        \demodulationsignal_o(\ditherfrequency t,\ditheramplitude)
    \end{bmatrix}$} (convolution.west);
    \draw[->] (convolution) |- node [pos=1.0,left] {$\begin{bmatrix}
        \estimate{\xi}_1(\ditherfrequency t,\estimate{\parameter},\ditheramplitude)\\
        \vdots\\
        \estimate{\xi}_o(\ditherfrequency t,\estimate{\parameter},\ditheramplitude)
    \end{bmatrix}$}(gradient-estimate);
    % draw a line from the gradient-estimate block back to the dynamics using the ++ symbol which means to extend a line from the starting point (in this case gradient-estimate.west t0 -0.5cm then moves vertically and horizontally)
\end{tikzpicture}
    }
    \caption{Block diagram of derivative estimator}
    \label{blk:problem-statement:estimation-block-diagram}
\end{figure}
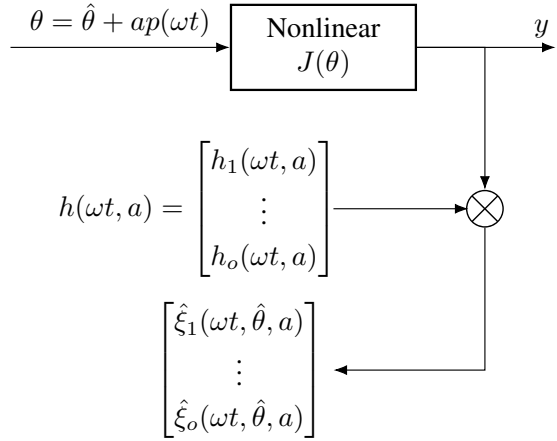

While affine systems are prevalent for ESCs, one still needs to average the dynamics to obtain the average system for stability analysis. The following assumptions are necessary for an $m$-th order ESC. 
\begin{asmp}
    The cost function $\costfunction: \realnumbers^n \to \realnumbers$ is $m$-times differentiable ($\costfunction\in \continuous^m$) for the $m$-th order algorithm.
    \label{asmp:problem-statement:assumption-1-cost-function}
\end{asmp}
\begin{asmp}
    \label{asmp:problem-statement:assumption-2-perturbation-signal}
    The dither signal $\dithersignal:\realnumbers\to\realnumbers^n$ is a continuous function, periodic or almost periodic, and
    \begin{equation}
        \label{eq:asmp2:assumption2-equation}
        \sup_{t\in\realnumbers} \Vert \dithersignal(t) \Vert_2 = 1
    \end{equation}
\end{asmp}
Under these assumptions, one can formulate the derivative estimates $\estimate{\derivativevalue}$. Although $\estimate{\derivativevalue}$ may result from differential equations \cite[Eq~14]{ref:nesic-2010-2}, the simplest formulation of $\estimate{\derivativevalue}$ as shown in Fig.~\ref{blk:problem-statement:estimation-block-diagram}, is an element-wise definition 
\begin{equation}
    \label{eq:derivative-estimate:time-wise}
    \estimate{\derivativevalue}_i\left(\omega t, \estimate{\parameter}, a\right) = \derivativefilter_i(\ditherfrequency t, \ditheramplitude) \costfunction\left(\estimate{\parameter} + \ditheramplitude \dithersignal(\ditherfrequency t)\right)
\end{equation}
where the signal $\derivativefilter_i$ is used to "demodulate" the derivative information from the cost function output variation. As expected, $\derivativefilter_i$ depends on the system designer's choice of $\dithersignal$. In \cite{ref:ghaffari-2012}, the dither signal $\dithersignal$ was a sinusoidal signal with elements $\dithersignal_i(t) = d_i \sin(r_i t)$ where $d_i$ and $r_i$ controlled the relative dither amplitudes and frequencies, respectively. This sinusoidal dither signal has corresponding $\derivativefilter_i$ for each $\multiindex_i$ as: $h_i(t, \ditheramplitude) = \frac{2}{\ditheramplitude d_k} \sin(r_k t)$ if $\multiindex_i$ has $k$-th element 1 and the rest zero; $h_i(t, \ditheramplitude) = \frac{16}{\ditheramplitude^2 d_k^2} \left(\sin^2(r_k t) - \frac{1}{2}\right)$ if $\multiindex_i$ has $k$-th element 2 and the rest zero; and $h_i(t, \ditheramplitude) = \frac{4}{\ditheramplitude^2 d_j d_k} \sin(r_j t) \sin(r_k t)$ if $\multiindex_i$ has $j$-th and $k$-th element 1 and the rest zero. One can continue to find $\derivativefilter_i$ for higher than second-order derivatives using \cite{ref:nesic-2010-2} but these results are only for sinusoidal $\dithersignal$ and scalar $\costfunction$. Unlike \cite{ref:ghaffari-2012}, we also consider almost periodic $\dithersignal$. Well-known texts on almost periodic functions \cite{ref:fink-1974,ref:shubin-1978} should be consulted for a thorough treatment of the subject, but for illustrative purposes, an example of an almost periodic $\dithersignal$ is the sinusoidal dither signal with $r_1 = 1$ and $r_2 = \sqrt{2}$ since $\sin(t)$ and $\sin(\sqrt{2}t)$ do not share a common period between them.

The output dimension of $\estimate{\derivativevalue}$ depends on the necessary $m$-th order derivatives required for the method in question. Methods using just the gradient ($m=1$) will have $\estimate{\derivativevalue}(\ditherfrequency t, \estimate{\parameter}, a)\in\realnumbers^{n}$ whereas Newton-based methods ($m=2$) need both the gradient and the Hessian of $\costfunction$ and thus $\estimate{\derivativevalue}(\ditherfrequency t, \estimate{\parameter}, a)\in\realnumbers^{\frac{1}{2}n^2 + \frac{3}{2}n}$. To generalize, we define $\multiindexsize(k)$ as
\begin{equation}
    \multiindexsize(k) = \big\vert \left\lbrace \multiindex\ \vert\ \left\vert\multiindex\right\vert = k \right\rbrace \big\vert = \binom{n + k - 1}{k}  % n choose k with replacement
\end{equation}
to keep count of the total number of unique $k$-th order partial derivatives. Now we can state that $\estimate{\derivativevalue}(\ditherfrequency t, \estimate{\parameter}, a)\in\realnumbers^{o}$ where $o = \sum_{i=0\text{ or }1}^{m}\multiindexsize(i)$ with the initial $i$ depending on if the algorithm requires $D^{0}\costfunction = \costfunction$. 

The analysis of ESCs as mentioned previously, does not happen in the original system, but rather in some nominal system, which is the average system for this work. The average system is defined as
\begin{align}
    \label{eq:parameter-update-ode:average}
    \derivative{t}\avgest{\state} &= f_0\left(\avgest{\state} \right) + f_1\left(\avgest{\state}\right) \avgest{\derivativevalue}\left(\avgest{\parameter}, \ditheramplitude\right)
\end{align}
where $\avgest{\state}$  and $\avgest{\parameter}$ are the average state and parameter estimates with the average derivative estimate $\avgest{\derivativevalue}$ defined as
\begin{equation}
    \label{eq:derivative-estimate:average}
    \avgest{\derivativevalue}_i\left(\avgest{\parameter}, a\right)  = \lim_{T\to\infty}\frac{1}{T}\int_{0}^{T} \derivativefilter_i(\tau, a) \costfunction\left(\avgest{\parameter} + \ditheramplitude \dithersignal(\tau)\right) d\tau
\end{equation}
One would hope that $\lim_{\ditheramplitude\to 0} \avgest{\derivativevalue}_i(\avgest{\parameter}, \ditheramplitude)=\differentialoperator{\multiindex_i}\costfunction(\avgest{\parameter})$ for all $i$ so that the average system resembles some model-based system
\begin{align}
    \label{eq:parameter-update-ode:model-based}
    \derivative{t}\altstate &= f_0\left(\altstate \right) + f_1\left(\altstate\right) \derivativevalue(\altparameter)
\end{align}
for sufficiently small $\ditheramplitude$. Here, the adjective ``model-based'' indicates that the system has direct access to the derivative information of $\costfunction$ and we use $\altstate$ and $\altparameter$ to represent the state vector and parameter estimates in this model-based system. One would like to study the model-based system to glean stability insights into the original model-free system. However, we first need to find an $\derivativefilter$ if even one exists so that one can relate the model-free systems with their respective model-based systems.

%% Main results
\section{Main Results}
    \label{sec:main-results}
    In order to show the existence of an $\demodulationsignal$, we first consider the problem of finding various derivatives $\derivativevalue$ for a perturbed polynomial $P_m \left(\ditheramplitude \dithersignal(t), \avgest{\parameter}\right)$ of order $m$ which is centered at $\avgest{\parameter}$. This polynomial can be expressed in multi-index notation as 
\begin{align}
    \label{eq:proof:polynomial-multi-index}
    P_m(\ditheramplitude \dithersignal(t),\avgest{\parameter}) &= \sum_{|\multiindex|\leq m}\frac{1}{\multiindex!}a^{|\multiindex|}\dithersignal(t)^{\multiindex} \derivativevalue_i\left(\avgest{\parameter}\right) \\
    &= \extendeddithersignal(t)^T A \derivativevalue\left(\avgest{\parameter}\right)
\end{align}
where $\derivativevalue_i\left(\avgest{\parameter}\right)=D^{\multiindex_i} \costfunction(\avgest{\parameter})$ are the derivative values of interest, the extended dither signal vector includes the signals $\extendeddithersignal_i(t)~=~\dithersignal(t)^{\multiindex_i}$ associated with the $\derivativevalue_i$, and $A$ is a diagonal matrix with $\ditheramplitude^{\vert \multiindex_i \vert}/\multiindex_i !$ as the $i$-th element on the main diagonal and zero everywhere else. Note that $\extendeddithersignal$ is a continuous (almost) periodic function as a result of $\dithersignal$ being a continuous (almost) periodic function in Assumption \ref{asmp:problem-statement:assumption-2-perturbation-signal}. Also, since $\sup_{t\in\realnumbers} \Vert \dithersignal(t) \Vert_2= 1$ by Assumption \ref{asmp:problem-statement:assumption-2-perturbation-signal}, the dither signal components being upper-bounded by $\left\vert \dithersignal(t) \right\vert \leq 1$ implies that $\left\vert \extendeddithersignal_i(t) \right\vert \leq 1$ and $\Vert \extendeddithersignal(t) \Vert_2 \leq \sqrt{o}$ for all $t\in\realnumbers$.

The average derivative estimate for this polynomial is
\begin{align}
    \label{eq:proof:average-estimate-polynomial}
    \avgest{\derivativevalue}\left(\avgest{\parameter}, \ditheramplitude\right) &= \lim_{T\to\infty}\frac{1}{T}\int_0^T \demodulationsignal(\tau, \ditheramplitude)P_m(\ditheramplitude\dithersignal(\tau),\avgest{\parameter})d\tau \\
    \label{eq:proof:average-estimate-polynomial:simplified}
    &{}={} \left(\lim_{T\to\infty}\frac{1}{T}\int_0^T \demodulationsignal(\tau, \ditheramplitude) \extendeddithersignal(\tau)^T d\tau\right) A \derivativevalue\left(\avgest{\parameter}\right)
\end{align}
and the desired outcome is that $\demodulationsignal$ results in $\avgest{\derivativevalue}\left(\avgest{\parameter},\ditheramplitude\right) = \derivativevalue\left(\avgest{\parameter}\right)$ for any nonzero $\ditheramplitude$. Thus, $\demodulationsignal$ must satisfy the condition
\begin{equation}
    \label{eq:proof:demodulation-signal:perturbation-requirement}
    \lim_{T\to\infty}\frac{1}{T} \int_{0}^T \demodulationsignal(\tau, \ditheramplitude) \extendeddithersignal(\tau)^T d\tau = A^{-1}
\end{equation}
so that $\demodulationsignal$ achieves the desired result $\avgest{\derivativevalue}\left(\avgest{\parameter}, \ditheramplitude\right) = \derivativevalue\left(\avgest{\parameter}\right)$. Here we can tell that a necessary and sufficient condition for the existence of $\demodulationsignal$ is that $\extendeddithersignal$ is a vector of linearly independent signals, which is to say that
\begin{equation}
    \label{eq:proof:beta-condition}
    \nexists \beta \neq 0\ \suchthat\ \extendeddithersignal(t)^T \beta = 0\ \forall t\in\realnumbers
\end{equation}
The condition is necessary since, if there existed a $\beta \neq 0$, then right multiplying both sides of \eqref{eq:proof:demodulation-signal:perturbation-requirement} by $\beta$ would result in the contradiction of the left side being a zero vector but the right side a non-zero vector. The condition is also sufficient since
\begin{equation}
    \label{eq:proof:demodulation-equation}
    \demodulationsignal(t, \ditheramplitude) = A^{-1}\covariance^{-1}\extendeddithersignal(t)
\end{equation}
with $\covariance$ being the covariance matrix of $\extendeddithersignal$ as defined by
\begin{equation}
\covariance = \lim_{T\to\infty}\frac{1}{T}\int_0^T \extendeddithersignal(\tau) \extendeddithersignal(\tau)^T d\tau
\end{equation}
would satisfy \eqref{eq:proof:demodulation-signal:perturbation-requirement}. The demodulation signal $\demodulationsignal$ is well defined since $\covariance$ is a positive definite matrix, and therefore invertible, since
\begin{align}
    \beta^T \covariance \beta & = \lim_{T\to\infty} \frac{1}{T}\int_{0}^{T} \left(\beta^T \extendeddithersignal(\tau)\right)^2 d\tau \\
    &\leq \lambda_{\max}\left(\covariance\right) \Vert \beta \Vert_2^2 \leq \trace\left(\covariance\right) \Vert \beta \Vert_2^2 \leq o \left\Vert \beta \right\Vert_2^2
\end{align}
is finite and strictly positive \cite[Th~3.8]{ref:fink-1974} for any nonzero $\beta$. Note that to meet the condition in \eqref{eq:proof:beta-condition}, $\extendeddithersignal$ only needs to be comprised of linearly independent signals, not necessarily orthogonal signals as stated in \cite{ref:nesic-2010-2}. This implies, particularly for the $m=1$ case, that it is not necessary for $\dithersignal$ to be comprised of orthogonal signals when estimating the derivative as required by \cite{ref:mills-2018}.

The extended dither function is linearly independent if and only if $\covariance$ is invertible. In practice, it is easier to calculate $\covariance$ and check its determinate to determine if $\covariance$ is invertible than to check the linear independence of $\extendeddithersignal$ by calculating an orthogonal basis using the Gram-Schmidt reduction process. This is because: 1) we already need $\covariance^{-1}$ for \eqref{eq:proof:demodulation-equation} and 2) the computational overhead to determine and simplify the signals in the orthogonal basis can be quite complex.

Additionally, we do not claim that \eqref{eq:proof:demodulation-equation} results in the only possible solution to \eqref{eq:proof:demodulation-signal:perturbation-requirement} since it may be possible to find another $\demodulationsignal'$ with
\begin{equation}
    \label{eq:proof:crossvariance-demodulation}
    \demodulationsignal'(t,\ditheramplitude) = A^{-1} R^{-1} r(t)
\end{equation}
where $r(t)\in\realnumbers^o$ is vector of continuous, bounded, linear independent, (almost) periodic signals and $\crossvariance$ is the cross-variance between the signals
\begin{equation}
    \label{eq:proof:crossvariance-covariance-matrix}
    \crossvariance = \lim_{\period{T}\to \infty}\frac{1}{\period{T}}\int_{0}^\period{T} r(\tau) \extendeddithersignal(\tau)^T d\tau
\end{equation}
This requires finding $r$ and ensuring that the associated $\crossvariance$ is invertible, a task which may not be trivial. Hence, determining $\demodulationsignal$ is easier with the covariance version \eqref{eq:proof:demodulation-equation} rather than the cross-variance version \eqref{eq:proof:crossvariance-demodulation}.

Having determined the necessary and sufficient conditions for the existence of $\demodulationsignal$, we present the main theorem of this work which states that, for a non-polynomial $\costfunction$, the limit of $\avgest{\derivativevalue}(\altparameter, \ditheramplitude) \to \derivativevalue(\altparameter)$ is valid as $\ditheramplitude$ approaches zero.
\begin{thm}
    \label{thm:existence-of-consistent-derivative-estimation-signals}
    Let $\costfunction: \realnumbers^n \to \realnumbers$ satisfies Assumption \ref{asmp:problem-statement:assumption-1-cost-function} for an $m$-th order ESC and $\dithersignal$ satisfies Assumption \ref{asmp:problem-statement:assumption-2-perturbation-signal}. Possible demodulation signal vectors $\demodulationsignal$ exist if and only if the associated $\extendeddithersignal$ is a vector of linearly independent signals. The use of $h$ in \eqref{eq:derivative-estimate:time-wise} will result in the limit $\avgest{\derivativevalue}(\avgest{\parameter}, \ditheramplitude)\to\derivativevalue(\avgest{\parameter})$ as $\ditheramplitude\to 0$ for all $\avgest{\parameter}\in\realnumbers^n$.
\end{thm}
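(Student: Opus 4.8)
The first assertion---that demodulation signals $\demodulationsignal$ exist exactly when $\extendeddithersignal$ is a vector of linearly independent signals---is already delivered by the derivation preceding the statement, so my plan is to consolidate it rather than reprove it. Condition \eqref{eq:proof:demodulation-signal:perturbation-requirement} is precisely what forces $\avgest{\derivativevalue}$ to reproduce $\derivativevalue$ on a polynomial; the argument around \eqref{eq:proof:beta-condition} shows this is unsolvable when $\extendeddithersignal$ is linearly dependent, while \eqref{eq:proof:demodulation-equation} exhibits an explicit solution when it is independent, equivalently when $\covariance$ is invertible. The real content of the theorem is the second assertion: the limit $\avgest{\derivativevalue}(\avgest{\parameter},\ditheramplitude)\to\derivativevalue(\avgest{\parameter})$ as $\ditheramplitude\to 0$ for a general $\costfunction\in\continuous^m$ rather than for a polynomial.

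For that, the plan is to reduce the general cost function to the polynomial case already settled. Since $\costfunction\in\continuous^m$ by Assumption \ref{asmp:problem-statement:assumption-1-cost-function}, Taylor's theorem with Peano remainder gives
\begin{equation*}
    \costfunction(\avgest{\parameter}+\ditheramplitude\dithersignal(\tau)) = P_m(\ditheramplitude\dithersignal(\tau),\avgest{\parameter}) + R_m(\ditheramplitude\dithersignal(\tau),\avgest{\parameter}),
\end{equation*}
where $P_m$ is exactly the Taylor polynomial of \eqref{eq:proof:polynomial-multi-index} and $R_m$ is the remainder. Substituting into \eqref{eq:derivative-estimate:average} and using linearity of the mean value splits $\avgest{\derivativevalue}$ into a polynomial contribution and a remainder contribution. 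By the construction of $\demodulationsignal$ in \eqref{eq:proof:demodulation-equation}, which by design meets \eqref{eq:proof:demodulation-signal:perturbation-requirement}, the polynomial contribution equals $\derivativevalue(\avgest{\parameter})$ exactly for every nonzero $\ditheramplitude$, as in the computation \eqref{eq:proof:average-estimate-polynomial}--\eqref{eq:proof:average-estimate-polynomial:simplified}. It then remains to show the remainder contribution vanishes as $\ditheramplitude\to 0$.

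This remainder estimate is the crux. The factor $A^{-1}$ in $\demodulationsignal$ has $i$-th diagonal entry $\multiindex_i!/\ditheramplitude^{|\multiindex_i|}$, so $\demodulationsignal_i$ grows like $\ditheramplitude^{-|\multiindex_i|}$, at worst $\ditheramplitude^{-m}$. The countervailing decay comes from Assumption \ref{asmp:problem-statement:assumption-2-perturbation-signal}: because $\Vert\ditheramplitude\dithersignal(\tau)\Vert_2\leq\ditheramplitude$ for every $\tau$, the Peano remainder satisfies a bound uniform in $\tau$, namely for any $\delta>0$ there is $\eta>0$ with $|R_m(\ditheramplitude\dithersignal(\tau),\avgest{\parameter})|\leq\delta\,\ditheramplitude^{m}$ whenever $\ditheramplitude<\eta$. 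Since $\covariance^{-1}$ is constant and $\Vert\extendeddithersignal(\tau)\Vert_2\leq\sqrt{o}$, the vector $\covariance^{-1}\extendeddithersignal(\tau)$ is bounded uniformly in $\tau$, so each integrand of the remainder contribution is bounded by a constant times $\delta\,\ditheramplitude^{m-|\multiindex_i|}\leq\mathrm{const}\cdot\delta$ for $\ditheramplitude\leq 1$. Being uniform in $\tau$, this bound passes through the average $\tfrac{1}{T}\int_0^T$ and the limit $T\to\infty$, yielding $|\avgest{\derivativevalue}_i(\avgest{\parameter},\ditheramplitude)-\derivativevalue_i(\avgest{\parameter})|\leq\mathrm{const}\cdot\delta$; since $\delta$ is arbitrary, the limit follows.

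The hard part is exactly this balancing of the faster-than-$\ditheramplitude^{m}$ decay of the remainder $R_m$ against the $\ditheramplitude^{-m}$ growth of $\demodulationsignal$, carried out uniformly in $\tau$ so that it survives the infinite-horizon averaging; both features are supplied by the normalization $\sup_{t}\Vert\dithersignal(t)\Vert_2=1$ of Assumption \ref{asmp:problem-statement:assumption-2-perturbation-signal}. A secondary point I would verify is that each mean value in \eqref{eq:derivative-estimate:average} exists at all: the integrands are continuous functions of the (almost) periodic $\dithersignal$ and are therefore themselves (almost) periodic, so their mean values are well defined by standard almost-periodic theory \cite{ref:fink-1974}, exactly as was invoked for $\covariance$.
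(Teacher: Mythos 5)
Your proposal is correct and takes essentially the same route as the paper: defer the existence equivalence to the derivation preceding the theorem, Taylor-split $\costfunction$ about $P_m$ so the polynomial part is reproduced exactly by the construction of $\demodulationsignal$, and kill the remainder contribution by balancing the $m!/\ditheramplitude^{m}$ growth of $\sup_{\tau}\left\vert \demodulationsignal_i(\tau,\ditheramplitude)\right\vert$ against a remainder bound of order $o(\ditheramplitude^{m})$ that is uniform in $\tau$. The only differences are bookkeeping: you invoke the Peano form of the remainder where the paper uses the mean-value form \cite[Th~1]{ref:ash-1990} together with $\varepsilon$--$\delta$ continuity of the $m$-th order derivatives (which additionally produces the explicit bound \eqref{eq:proof:convergence:average-estimate-error-magnitude-2} that the paper's later remark exploits to read off convergence rates), and you write the demodulation bound only for the covariance form \eqref{eq:proof:demodulation-equation}, though it transfers verbatim to the cross-variance form \eqref{eq:proof:crossvariance-demodulation} with $\sqrt{o}$ and $\covariance$ replaced by $r_{\max}$ and $\crossvariance$.
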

The condition for the existence of $\demodulationsignal$ has already been shown and so we prove the convergence portion of Theorem~\ref{thm:existence-of-consistent-derivative-estimation-signals}.

\begin{proof}
    In order to prove convergence of $\avgest{\derivativevalue}$, we need to use an $\epsilon-\delta$ definition of convergence
    \begin{multline}
        \label{eq:proof:convergence:epsilon-delta-definition}
        \forall\avgest{\parameter}\in\realnumbers^{n},\forall\epsilon > 0,\ \exists \delta = \delta\left(\avgest{\parameter},\epsilon\right) > 0\ \\ \suchthat\ \forall a\ \in (0,\delta),\ \left\Vert\avgest{\derivativevalue}(\avgest{\parameter}, \ditheramplitude)-\derivativevalue(\avgest{\parameter})\right\Vert_2 < \epsilon
    \end{multline}
    where the estimate error $\left\Vert \avgest{\derivativevalue}(\avgest{\parameter}, \ditheramplitude)-\derivativevalue(\avgest{\parameter})\right\Vert_2$ is the Euclidean norm
    \begin{equation}
        \label{eq:proof:functional-norm}
        \left\Vert \avgest{\derivativevalue}(\avgest{\parameter}, \ditheramplitude)-\derivativevalue(\avgest{\parameter}) \right\Vert_2 = \left(\sum_{i=1}^o\left\vert \avgest{\derivativevalue}_i(\avgest{\parameter}, \ditheramplitude)-\derivativevalue_i(\avgest{\parameter}) \right\vert^2\right)^{1/2}
    \end{equation}
    Furthermore, we can focus on the individual average derivative estimates by finding every $\delta_i = \delta_i\left(\avgest{\parameter}, \varepsilon/\sqrt{o}\right) > 0$ such that
    \begin{equation}
        \label{eq:proof:convergence:component-tolerance}
        \left\vert \avgest{\derivativevalue}_i\left(\avgest{\parameter}, a\right) - \derivativevalue_i\left(\avgest{\parameter}\right) \right\vert < \frac{\epsilon}{\sqrt{o}}
    \end{equation}
    for all $\ditheramplitude\in(0, \delta_i)$ as $\delta = \min_{i=1,\ldots,o} \delta_i$ would be sufficient for \eqref{eq:proof:convergence:epsilon-delta-definition}.

    For the individual derivative estimates, we calculate the difference between the average estimate and the true derivative value with
    % \begin{equation}
    \begin{multline}
    \label{eq:proof:convergence:average-estimate-error}
        \avgest{\derivativevalue}_i\left(\avgest{\parameter}, \ditheramplitude\right) - \derivativevalue_i\left(\avgest{\parameter}\right) = \lim_{\period{T}\to\infty}\frac{1}{T}\int_0^{T}\demodulationsignal_i(\tau, a) \\ \cdot\left[ \costfunction\left(\avgest{\parameter} +\ditheramplitude \dithersignal(\tau)\right) - 
        P_{m}\left(\ditheramplitude\dithersignal(\tau),\avgest{\parameter}\right)\right]d\tau
    \end{multline}    
    by utilizing the fact that $\avgest{\derivativevalue}_i\left(\avgest{\parameter}, \ditheramplitude\right) = \derivativevalue_i\left(\avgest{\parameter}\right)$ for the polynomial $P_m$ by our construction of $\demodulationsignal$. Utilizing Taylor's Theorem \cite[Th~1]{ref:ash-1990}, we can further simplify \eqref{eq:proof:convergence:average-estimate-error} to
    \begin{multline}
    \label{eq:proof:convergence:average-estimate-error-2}
        \avgest{\derivativevalue}_i\left(\avgest{\parameter}, \ditheramplitude\right) - \derivativevalue_i\left(\avgest{\parameter}\right) = \lim_{\period{T}\to\infty}\frac{1}{T}\int_0^{T}\demodulationsignal_i(\tau, a) \\ \cdot\left[ \sum_{\vert \multiindex_j \vert = m} \frac{a^{m}}{\multiindex_j!}\left(D^{\multiindex_j}\costfunction\left(\avgest{\parameter} + \eta_j(\tau)\right) - D^{\multiindex_j}\costfunction\left(\avgest{\parameter}\right)\right) \extendeddithersignal_j(\tau) \right]d\tau
    \end{multline}
    where $\eta_j(\tau)\in\ball{\ditheramplitude}$ for all $\tau\in (0, T)$. We now bound the magnitude of the error by
    \begin{multline}
    \label{eq:proof:convergence:average-estimate-error-magnitude}
        \left\vert \avgest{\derivativevalue}_i\left(\avgest{\parameter}, \ditheramplitude\right) - \derivativevalue_i\left(\avgest{\parameter}\right) \right\vert \leq \sum_{\vert \multiindex_j \vert = m}\sup_{\tau\in\realnumbers} \left\vert \demodulationsignal_i(\tau, a)\right\vert \\ \cdot\left[ \frac{a^{m}}{\multiindex_j!}\left\vert D^{\multiindex_j}\costfunction\left(\avgest{\parameter} + \eta_j(\tau)\right) - D^{\multiindex_j}\costfunction\left(\avgest{\parameter}\right)\right\vert \left\vert \extendeddithersignal_j(\tau) \right\vert \right]
    \end{multline}
    The demodulation signal component, when formulated with \eqref{eq:proof:crossvariance-demodulation}, can be bounded by
    \begin{align}
        \left\vert \demodulationsignal_i(\tau) \right\vert^2 &\leq  \left\Vert \demodulationsignal(\tau) \right\Vert_2^2 = r(\tau)^T \left(\crossvariance^{-1}\right)^T A^{-2} \crossvariance^{-1} r(\tau) \\
        &\leq \sigma_{o}\left(A\right)^{-2} r(\tau)^T \left(\crossvariance^{-1}\right)^T \crossvariance^{-1} r(\tau) \\
        &\leq \sigma_{o}\left(A\right)^{-2} \sigma_{o}\left(\crossvariance\right)^{-2} \left\Vert r(\tau) \right\Vert_2^2
    \end{align}
    where $\sigma_o(\cdot)$ gives the $o$-th, i.e.,\ the smallest, singular value of a matrix. For $A$, the smallest singular value is known to be $\ditheramplitude^{m}/m!$ when $\ditheramplitude < 1$ while $\sigma_o(\crossvariance)$ is some constant dependent on the selection of $r$. Since $r$ is a boundable (almost) periodic function and $\exists\ r_{\max} > 0$ such that $\sup_{t\in\realnumbers}\Vert r (t) \Vert \leq r_{\max} < \infty$, then \eqref{eq:proof:convergence:average-estimate-error-magnitude} has the upper-bound
    \begin{multline}
    \label{eq:proof:convergence:average-estimate-error-magnitude-2}
        \left\vert \avgest{\derivativevalue}_i\left(\avgest{\parameter}, \ditheramplitude\right) - \derivativevalue_i\left(\avgest{\parameter}\right) \right\vert \leq  \frac{m! r_{\max}}{\sigma_o(\crossvariance)} \\ \cdot \left(\sum_{\vert \multiindex_j \vert = m}\sup_{\tau\in\realnumbers} \left\vert D^{\multiindex_j}\costfunction\left(\avgest{\parameter} + \eta_j(\tau)\right) - D^{\multiindex_j}\costfunction\left(\avgest{\parameter}\right)\right\vert \right)
    \end{multline}
    This gives bound with $h$ formulated by \eqref{eq:proof:crossvariance-demodulation}. If $\demodulationsignal$ had been formulated by \eqref{eq:proof:demodulation-equation} instead, then one should replace $r_{\max}$ and $\crossvariance$ with $\sqrt{o}$ and $\covariance$, respectively.
    
    Since $\costfunction\in\continuous^m$ by Assumption \ref{asmp:problem-statement:assumption-1-cost-function}, all derivatives $D^{\multiindex_j}\costfunction$ are continuous functions. Therefore we can make the components of the summation in the right hand side of \eqref{eq:proof:convergence:average-estimate-error-magnitude-2} to be as small as possible. Let $\delta_j$ be the $\delta$ function in the $\varepsilon-\delta$ continuity definition for the $j$-th derivative $D^{\multiindex_j}\costfunction$ where $\vert \multiindex_j \vert = m$. Then
    \begin{equation}
        \delta_i\left(\avgest{\parameter}, \frac{\varepsilon}{\sqrt{o}}\right) = \min_{j} \delta_j\left(\avgest{\parameter}, \frac{\varepsilon \sigma_o(\crossvariance)}{\multiindexsize(m)\cdot r_{\max}\cdot m! \cdot \sqrt{o}} \right)
    \end{equation}
    ensures that the right hand side of \eqref{eq:proof:convergence:average-estimate-error-magnitude-2} meets the condition in \eqref{eq:proof:convergence:component-tolerance}. Thus 
    the $\delta$ function for the definition in \eqref{eq:proof:convergence:epsilon-delta-definition} is
    \begin{equation}
        \delta\left(\avgest{\parameter}, \varepsilon\right) = \min_{j=1,\ldots, c(m)}\delta_j \left(\avgest{\parameter}, \frac{\varepsilon \sigma_o(\crossvariance)}{\multiindexsize(m)\cdot r_{\max}\cdot m! \cdot \sqrt{o}} \right)
    \end{equation}
    which completes the proof.
\end{proof}

\begin{rem}
    \label{rmk:convergence-derivatives}
    It is important to show the convergence in the derivative estimates because there are cases where the residual terms cause instabilities in the average system as shown in \cite{ref:mcnamee-2025:scalar-newton-automatica}.
\end{rem}

\begin{rem}
    While the convergence proof of Theorem~\ref{thm:existence-of-consistent-derivative-estimation-signals} seems more complicated than necessary with the use of the $\varepsilon-\delta$ proof, this is a necessary level of complexity given Assumption \ref{asmp:problem-statement:assumption-1-cost-function}. 
    Let $\kappa\in\classK$ be the slowest local convergence rate for the derivatives $D^{\multiindex_i}\costfunction$ so that
    \begin{equation}
        \left\vert D^{\multiindex_i}\costfunction(\parameter) - D^{\multiindex_i}\costfunction\left(\avgest{\parameter}\right) \right\vert \in \mathcal{O}\left(\kappa\left(\left\Vert \parameter - \avgest{\parameter}\right\Vert\right)\right)
    \end{equation}
    for all $\vert\multiindex_i\vert = m$ and $\parameter$ in a neighborhood of $\avgest{\parameter}$. By an examination of \eqref{eq:proof:convergence:average-estimate-error-magnitude-2}, it is clear that the derivative estimates will share the same local convergence rate of the worst $m$-th order derivative local convergence rate. If $\costfunction\in\continuous^{m+1}$, like in \cite{ref:rusiti-2016} and \cite{ref:mills-2018}, then the local convergence rate is linear with $\kappa(r) = r$ but the local convergence is not necessarily linear. Consider the scalar cost function
    \begin{equation}
        \costfunction(\parameter) = \frac{4}{15}\left\vert \parameter \right\vert^{5/2}
    \end{equation}
    whose second-order derivative (i.e. the Hessian) is $\frac{d^2 \costfunction}{d\parameter^2}~=~\sqrt{\left\vert \parameter \right\vert}$. The second-order derivative satisfies the constraint
    \begin{equation}
        \left\vert \frac{d^2 \costfunction}{d\parameter^2}(\parameter) - \frac{d^2 \costfunction}{d\parameter^2}(0) \right\vert \leq \sqrt{\left\vert \parameter \right\vert}
    \end{equation}
    around $\parameter = 0$ which means that the second-order derivative estimate should have a convergence rate of $\kappa(r) = \sqrt{r}$, which is slower than linear. We show in Fig.~\ref{fig:1d-example-convergence} the convergence of the average Hessian estimate \cite{ref:ghaffari-2012}
    \begin{equation}
        \avgest{\hessian}(0, \ditheramplitude) = \frac{1}{2\pi}\int_0^{2\pi}\frac{-8}{\ditheramplitude^2}\cos(2\tau)\costfunction(\ditheramplitude\sin(\tau)) d\tau
    \end{equation}
    as a function of $\ditheramplitude$ in a log scale along with the functions $\ditheramplitude$ and $\sqrt{\ditheramplitude}$. The convergence rate can be gleaned by the slope in the log scale and $\avgest{\hessian}(0, \ditheramplitude)$ matches the shallower slope of $\sqrt{\ditheramplitude}$ rather than the steeper $\ditheramplitude$. Thus, $\avgest{\hessian}$ locally converges with respect to $\ditheramplitude$ at a rate \emph{slower} than linear.
    \begin{figure}
        \centering
        \includegraphics[width=1\linewidth]{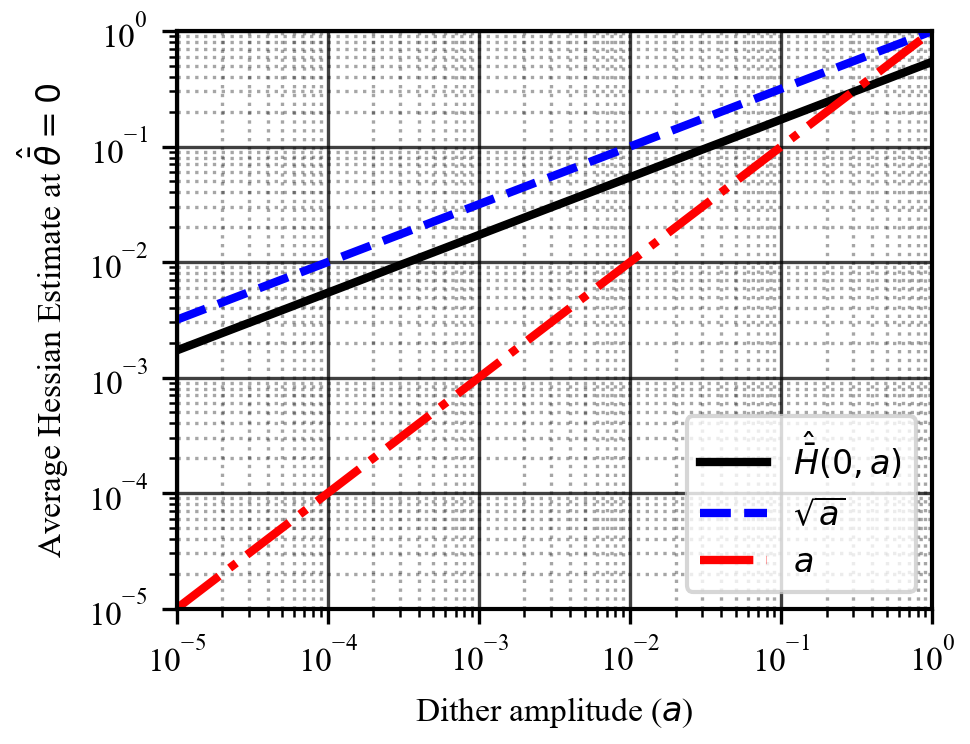}
        \caption{%\color{blue}
        1D example of the convergence explained by the proof in more detail.
        }
        \label{fig:1d-example-convergence}
    \end{figure}
\end{rem}

While Theorem~\ref{thm:existence-of-consistent-derivative-estimation-signals} gives the conditions for the existence of $\demodulationsignal$ whose signal components may be non-zero mean, it is possible to estimate the other derivatives without necessarily estimating $\costfunction(\avgest{\parameter})$.
\begin{prop}
    Let $\costfunction: \realnumbers^n \to \realnumbers$ satisfy Assumption \ref{asmp:problem-statement:assumption-1-cost-function} for an $m$-th order ESC and $\dithersignal$ satisfy Assumption \ref{asmp:problem-statement:assumption-2-perturbation-signal}. If one wants to use strictly zero mean signals to estimate derivatives of order $1$ through $m$ without estimating $\costfunction(\estimate{\parameter})$ ($m=0$), then possible demodulation signal vectors $\demodulationsignal$ exist if and only if the associated $ \error{\extendeddithersignal}$ is a vector of linearly independent signals where $\error{\extendeddithersignal}(t)~=~\extendeddithersignal(t)~-~\average{\extendeddithersignal}$ with
    % \begin{equation}
        $\average{\extendeddithersignal} = \lim_{T\to\infty}\frac{1}{T}\int_{0}^T \extendeddithersignal(\tau) d\tau$
    % \end{equation}
    is linearly independent. The use of $h$ in \eqref{eq:derivative-estimate:time-wise} will result in the limit $\avgest{\derivativevalue}(\avgest{\parameter}, \ditheramplitude)\to\derivativevalue(\avgest{\parameter})$ as $\ditheramplitude\to 0$ for all $\avgest{\parameter}\in\realnumbers^n$.
\end{prop}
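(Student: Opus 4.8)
The plan is to mirror the two-part structure of Theorem~\ref{thm:existence-of-consistent-derivative-estimation-signals}---first the existence ``if and only if'', then the convergence---while carrying the extra requirement that every component of $\demodulationsignal$ have zero mean so that the constant term $\costfunction(\avgest{\parameter})$ never enters the estimate. After dropping the $\multiindex = 0$ index from $\set{A}$, the target derivatives are the $\derivativevalue_i(\avgest{\parameter}) = D^{\multiindex_i}\costfunction(\avgest{\parameter})$ with $1 \leq |\multiindex_i| \leq m$, yet the Taylor polynomial $P_m$ still carries the constant contribution $\costfunction(\avgest{\parameter})$. In the average estimate \eqref{eq:proof:average-estimate-polynomial} this constant reaches the $i$-th component only through the factor $\lim_{T\to\infty}\frac1T\int_0^T \demodulationsignal_i(\tau,\ditheramplitude)\,d\tau$, i.e., the mean $\average{\demodulationsignal}_i$. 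So the first step is simply to record that enforcing $\average{\demodulationsignal} = 0$ makes the estimate insensitive to $\costfunction(\avgest{\parameter})$, which is precisely the ``without estimating $\costfunction(\avgest{\parameter})$'' clause.

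The second step is to rewrite the demodulation requirement \eqref{eq:proof:demodulation-signal:perturbation-requirement} in terms of $\error{\extendeddithersignal}$. Substituting $\extendeddithersignal = \error{\extendeddithersignal} + \average{\extendeddithersignal}$ and using $\average{\demodulationsignal} = 0$, I would show
\begin{equation}
    \lim_{T\to\infty}\frac1T\int_0^T \demodulationsignal(\tau,\ditheramplitude)\extendeddithersignal(\tau)^T d\tau = \lim_{T\to\infty}\frac1T\int_0^T \demodulationsignal(\tau,\ditheramplitude)\error{\extendeddithersignal}(\tau)^T d\tau,
\end{equation}
because the $\average{\extendeddithersignal}$ contribution equals $\average{\demodulationsignal}\,\average{\extendeddithersignal}^T = 0$. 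Hence, under the zero-mean constraint, \eqref{eq:proof:demodulation-signal:perturbation-requirement} is equivalent to demanding that the right-hand integral equal $A^{-1}$.

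With this reformulation the ``if and only if'' is a verbatim copy of the Theorem with $\extendeddithersignal$ replaced by $\error{\extendeddithersignal}$. For necessity, a nonzero $\beta$ with $\error{\extendeddithersignal}(t)^T\beta = 0$ for all $t$ would, on right-multiplication, reproduce the contradiction behind \eqref{eq:proof:beta-condition}: a zero vector forced to equal the nonzero $A^{-1}\beta$. For sufficiency, when $\error{\extendeddithersignal}$ is linearly independent I would take
\begin{equation}
    \demodulationsignal(t,\ditheramplitude) = A^{-1}\error{\covariance}^{-1}\error{\extendeddithersignal}(t), \qquad \error{\covariance} = \lim_{T\to\infty}\frac1T\int_0^T \error{\extendeddithersignal}(\tau)\error{\extendeddithersignal}(\tau)^T d\tau,
\end{equation}
whose covariance $\error{\covariance}$ is positive definite---hence invertible---by the identical bound used after \eqref{eq:proof:demodulation-equation} together with \cite[Th~3.8]{ref:fink-1974}, since $\error{\extendeddithersignal}$ inherits continuity, boundedness, and (almost) periodicity from $\extendeddithersignal$. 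This construction automatically satisfies the zero-mean constraint, as $\average{\demodulationsignal} = A^{-1}\error{\covariance}^{-1}\average{\error{\extendeddithersignal}} = 0$ by the definition $\error{\extendeddithersignal} = \extendeddithersignal - \average{\extendeddithersignal}$, and it meets the reformulated requirement because $\error{\covariance}^{-1}$ cancels $\error{\covariance}$.

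Finally, the convergence claim follows the same $\varepsilon$--$\delta$ route as the proof of Theorem~\ref{thm:existence-of-consistent-derivative-estimation-signals}: the error \eqref{eq:proof:convergence:average-estimate-error} is unchanged, Taylor's theorem yields the same order-$m$ remainder, and the bound \eqref{eq:proof:convergence:average-estimate-error-magnitude-2} persists once the estimate of $|\demodulationsignal_i|$ is rebuilt from $\sigma_o(\error{\covariance})^{-1}$ and $\sup_{t}\Vert\error{\extendeddithersignal}(t)\Vert_2$, both finite because $\error{\extendeddithersignal}$ is bounded. I expect the main obstacle to be the equivalence in the second step: one must confirm that centering genuinely eliminates the $\costfunction(\avgest{\parameter})$ dependence without reintroducing a hidden constant direction---equivalently, that linear independence of $\error{\extendeddithersignal}$, rather than of $\extendeddithersignal$, is the correct hypothesis, since centering can destroy independence whenever some combination $\sum_i \beta_i \extendeddithersignal_i$ is constant even though $\extendeddithersignal$ itself is independent.
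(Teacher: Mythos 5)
Your proposal is correct and follows essentially the same route as the paper's proof: split the average estimate of the Taylor polynomial into the constant term and the $\extendeddithersignal^T A \derivativevalue$ term, use $\average{\demodulationsignal}=0$ to annihilate the $\costfunction(\avgest{\parameter})$ contribution, observe the requirement becomes that of Theorem~\ref{thm:existence-of-consistent-derivative-estimation-signals} with $\error{\extendeddithersignal}$ in place of $\extendeddithersignal$, and then reuse the $\varepsilon$--$\delta$ convergence argument verbatim. Your explicit construction $\demodulationsignal = A^{-1}\error{\covariance}^{-1}\error{\extendeddithersignal}$ with verification that it is itself zero mean, and your closing caution that centering can destroy independence (so the hypothesis must be on $\error{\extendeddithersignal}$, not $\extendeddithersignal$), merely spell out details the paper leaves implicit.
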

\begin{proof}
    Consider again the average derivative estimate
    \begin{align}
        \avgest{\derivativevalue}\left(\avgest{\parameter}, \ditheramplitude\right) &= \lim_{T\to\infty}\frac{1}{T}\int_0^T \demodulationsignal(\tau,\ditheramplitude) P_m(\ditheramplitude\dithersignal(\tau), \avgest{\parameter}) d\tau
    \end{align}
    \begin{multline}
        \phantom{\avgest{\derivativevalue}\left(\avgest{\parameter}, \ditheramplitude\right)} = \left(\lim_{T\to\infty}\frac{1}{T}\int_0^T \demodulationsignal(\tau,\ditheramplitude)\costfunction\left(\avgest{\parameter}\right)d\tau\right) \\
        + \left(\lim_{T\to\infty}\frac{1}{T}\int_0^T \demodulationsignal(\tau,\ditheramplitude)\extendeddithersignal(\tau)^T d\tau\right)A\derivativevalue\left(\avgest{\parameter}\right)
    \end{multline}
    where $\estimate{\derivativevalue}$ are estimates of the derivatives $\derivativevalue$ which now range from order 1 to $m$. By $\demodulationsignal$ being a zero mean signal, this average estimate is simplified to
    \begin{equation}
        \label{eq:zero-mean-proposition:average-derivative-estimate}
        \avgest{\derivativevalue}\left(\avgest{\parameter}, \ditheramplitude\right) = \left(\lim_{T\to\infty}\frac{1}{T}\int_0^T \demodulationsignal(\tau,\ditheramplitude)\error{\extendeddithersignal}(\tau)^T d\tau\right)A\derivativevalue\left(\avgest{\parameter}\right)
    \end{equation}
    since $\extendeddithersignal(\tau)~=~\error{\extendeddithersignal}(\tau)~-~\average{\extendeddithersignal}$. Since \eqref{eq:zero-mean-proposition:average-derivative-estimate} is now the same as \eqref{eq:proof:average-estimate-polynomial:simplified} with $\error{\extendeddithersignal}$ instead of $\extendeddithersignal$, it follows that the necessary and sufficient condition for the existence of $\demodulationsignal$ is the linear independence of $\error{\extendeddithersignal}$. The proof of the limit $\avgest{\derivativevalue}(\avgest{\parameter}, \ditheramplitude)\to\derivativevalue(\avgest{\parameter})$ as $\ditheramplitude\to 0$ is then identical to that in Theorem~\ref{thm:existence-of-consistent-derivative-estimation-signals}.
\end{proof}
\color{black}
%% example and results
\section{Nonholonomic Ground Vehicle Controller Design}
    \label{sec:nonholonomic-ground-vehicle-ex}
    To illustrate the usefulness of our results, we consider the problem of seeking a static source using a nonholonomic unicycle model. The vehicle dynamics is represented as
\begin{align}
    \dot{x}_c &= \commanded{\forwardvelocity} \cos\headingangle \\
    \dot{y}_c &= \commanded{\forwardvelocity} \sin\headingangle \\
    \dot{\headingangle} &= \commanded{\angularvelocity}
\end{align}
where $x_c$ and $y_c$ are the position of the vehicle's center in the global frame, $\headingangle$ is the heading angle, and $\commanded{\forwardvelocity}$ and $\commanded{\angularvelocity}$ are the commanded forward and angular velocities. The vehicle is also equipped with a rotating microphone sensor arm mounted on top of the robot as shown in Fig.\ \ref{fig:problem-statement:unicylce-model}. The concept behind this vehicle is to use local derivative estimates in the relative $(x_r, y_r)$ frame to integrate into a controller to move the vehicle towards the extremum.
% Orientation of the unicycle model plus the rotating microphone 
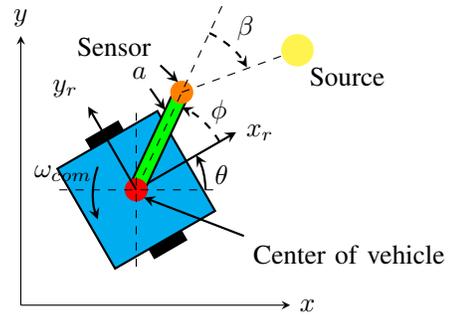
\begin{figure}[t!]
    \centering
    \resizebox{0.35\textwidth}{!}{
        \begin{tikzpicture}[scale=1, >=stealth]

    % Axes 2D setup
    \draw[->] (1.5, 1.5) -- (5, 1.5) node[right] {$x$};
    \draw[->] (1.5, 1.5) -- (1.5, 5) node[above] {$y$};
    
    % Vehicle orientation representation
    \begin{scope}[rotate around={30:(3,3)}]
        % vehicle body
        \fill[black] (2.75, 2.1) rectangle (3.25,3.9);
        \fill[cyan] (2.25, 2.25) rectangle (3.75, 3.75);
        \draw[thick] (2.25, 2.25) rectangle (3.75, 3.75);
        % x_r or x relative position
        \draw[->, thick] (3, 3) -- (4.5, 3) node[anchor=west] {$x_r$};
        % rotating microphone angle phi orientation
        \draw[dashed,->, thick] (4.25, 3) arc[start angle=0, end angle=35, radius=1.1];
        \node[above] at (4.3,3.1) {$\phi$};
    \end{scope}

    \begin{scope}[rotate around={65:(3,3)}]
        % Sensor position
        \fill[green] (3,2.9) rectangle (4.3, 3.1);
        \draw[thick] (3,2.9) rectangle (4.3, 3.1);
        % Arm length of the rotating sensor also for general cases
        % known as the \ditheramplitude
        \node[right] at (4.3, 3.8) (armlength) {$\ditheramplitude$};
        \draw[->,thick] (armlength) -- (4.1,3.1);  
        \fill[orange] (4.4, 3) circle (0.15);
        % sensor angle to source designated by \beta
        \begin{scope}[rotate around={-45:(4.4, 3)}]
            \draw[dashed] (4.4, 3) --++ (1.5, 0);
            \fill[color=yellow!80!white] (6, 3) circle (6pt) node[anchor=north west, color=black] at (6,2.9) {Source};
            % \draw[decorate, decoration={expanding waves, angle=30}] (6, 3) -- (4.4, 3); % Would be fun if we did a larger image
        \end{scope}
        \draw[dashed, thick, ->] (5.25, 3) arc[start angle=0, end angle=-35, radius=1.1];
        \node at  (5.5,2.6) {$\sensorangletosource$};
        \draw[dashed] (3, 3) --++ (2.75, 0);
        % sensor node 
        \node[anchor=south east] at (4.6,3.4) (sensor) {Sensor};
        \draw[->,thick] (sensor.south east) -- (4.5,3.1);
    \end{scope}

    % Center of the Vehicle
    \fill[red] (3,3) circle (0.15);
    \node[anchor=north west] at (4.4, 2.4) (vehicle_center) {Center of vehicle};
    \draw[->,thick] (vehicle_center.north west) -- (3.1,2.9);
    % Vehicle orientation or heading angle 
    \draw[->, thick] (3.9, 3) arc[start angle=0, end angle=30, radius=.9];
    \node[right] at (3.9,3.2) {$\headingangle$};

    % show the y relative (y_r) position
    \begin{scope}[rotate around={120:(3,3)}]    
        % Draw solid line to represent the 180 range of microphone
        \draw[thick, ->] (3,3) -- (4.25,3) node[anchor=south east] {$y_r$};
    \end{scope}

    % Angular velocity omega_o
    \draw[->, thick] (2.5, 3.3) arc[start angle=160, end angle=200, radius=1];
    \node[left] at (2.55, 3.22) {$\commanded{\angularvelocity}$};

    % Draw the dashed lines around the center of the robot helps to understand the relative frame of the vehicle x_r, y_r with the global frame 
    \draw[dashed] (2, 3) -- (4, 3);
    \draw[dashed] (3,2) -- (3, 4);

\end{tikzpicture}
    }
    \caption{Unicycle model with a rotating sensor. Note that $\sensorangletosource$ as shown is negative.}
    \label{fig:problem-statement:unicylce-model}
\end{figure}

\subsection{Design of Estimation Filters}

Previous work would have attempted to use sinusoidal perturbations of the sensor, achievable when $\microphoneangle$ is unconstrained, to locally estimate the derivatives in the relative frame. However, consider implementation constraints, such as requiring $\phi(t)\in [-\pi/2, \pi/2]$ for all $t$. This constraint prevents using sinusoidal perturbations, but we want to find perturbation signals $\dithersignal$ which have $\demodulationsignal$'s for estimating the derivatives in the relative frame. If we can find these $\dithersignal$, then we can avoid using the vehicle itself to perturb the map for perturbation-based derivative estimation as was the case with previous approaches \cite{ref:zhang-2007, ref:cochran-2009-2}. This is beneficial as delegating the sensor exploration to a servo enables energy and time savings during convergence to the source since the vehicle does not have to locally explore the space while also simultaneously moving towards the source.

\begin{figure}[t!]
    \centering
    \subfloat[]{
        \includegraphics[width=3.25in]{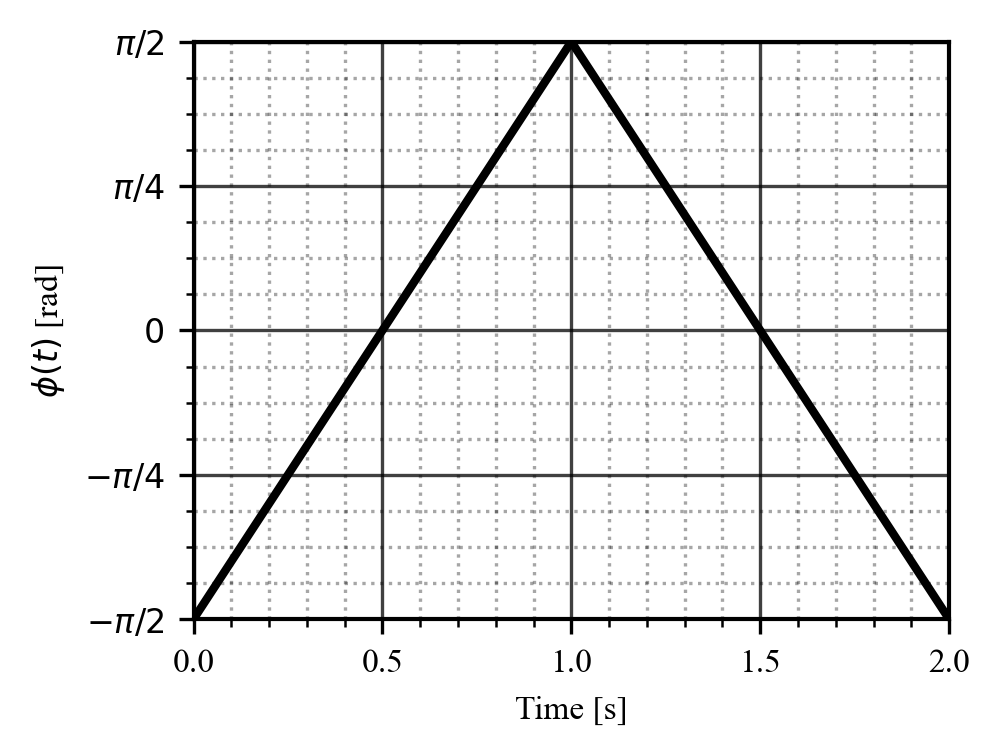}%
        \label{subfig:phi-function-v-time}
    }
    \hfil
    \subfloat[]{
        \includegraphics[width=3.25in]{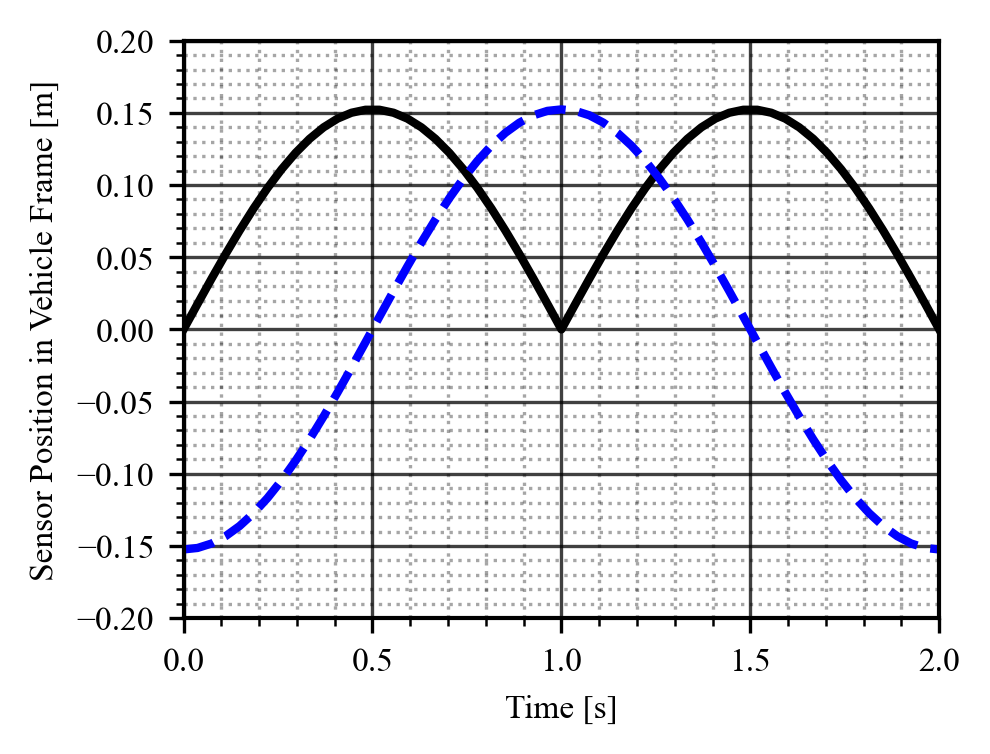}%
        \label{subfig:sensor-x-y-position-v-time}
    }
    \caption{
        (a) The $\microphoneangle(t)$ function over time and (b) The sensor position in the $x$ $y$ relative frame over time. 
    \\
    \textbf{KEY for b)}: 
    $p_{x_r}=$ \protect\raisebox{1.5pt}{\protect\tikz \protect\draw[black, line width=3pt] (0,0) -- (0.5, 0);} ,
    $p_{y_r}=$ \protect\raisebox{1.5pt}{\protect\tikz \protect\draw[blue,dashed, line width =3pt] (0,0) -- (0.5,0);}.
    }
    \label{fig:example:phi-function-px-py}
\end{figure}

Consider the sensor arm angle $\microphoneangle$ shown in Fig.~\ref{fig:example:phi-function-px-py} and defined as
\begin{equation}
\label{eq:example:phi-piecewise-function}
    \microphoneangle(t) = 
    \begin{cases}
        -\pi/2 + \pi\cdot(t\bmod 2), & \text{if}\ t\bmod2 < 1,\\[3mm]
        \ \pi/2 - \pi\cdot ((t-1)\bmod 2),  & \text{otherwise.}\
    \end{cases}
\end{equation}
which is a periodic function with a period of 2. Perturbations of the sensor in the relative frame is then
\begin{equation}
\label{eq:example:dither-amplitude}
    \ditheramplitude \dithersignal(t) = a\begin{bmatrix} \dithersignal_{x_r}(t) \\ \dithersignal_{y_r}(t) \end{bmatrix} = a\begin{bmatrix} \cos(\microphoneangle(t)) \\ \sin(\microphoneangle(t)) \end{bmatrix}
\end{equation}
where $\dithersignal_{x_r}$ and $\dithersignal_{y_r}$ are the perturbations in the relative frame of the vehicle. For illustrative purposes, we examine whether it is possible to estimate the gradient and Hessian of $\costfunction$ in the relative frame. The extended dither signal $\extendeddithersignal(t)$ for the first and second derivatives is
\begin{equation}
    \extendeddithersignal(t) = \begin{bmatrix}
        \cos(\phi(t)) \\
        \sin(\phi(t)) \\
        \cos^2(\phi(t)) \\
        \sin(\phi(t))\cos(\phi(t)) \\
        \sin^2(\phi(t))
    \end{bmatrix}
\end{equation}
The average of this specific signal vector is
\begin{equation}
    \label{eq:example:average-states}
        \average{\extendeddithersignal} = \lim_{T\to\infty}\frac{1}{\period{T}}\int_0^\period{T}\extendeddithersignal(\tau) d\tau= \begin{bmatrix}
        2/\pi\\
        0\\
        1/2\\
        0\\
        1/2
    \end{bmatrix}
\end{equation}
The covariance matrix of $\error{\extendeddithersignal}$ is 
\begin{equation}
    \label{eq:example:covariance-variance-calculation}
    \covariance = \lim_{T\to\infty} \frac{1}{T}\int_{0}^T \error{\extendeddithersignal}(\tau) \error{\extendeddithersignal}(\tau)^T d\tau = \begin{bmatrix} Q_{11} & Q_{12} \\ Q_{21} & Q_{22} \end{bmatrix}
\end{equation}
\begin{equation*}
    \label{eq:example:covariance-q11-matrix}
    Q_{11} = \begin{bmatrix} \left(\frac{1}{2} - \frac{4}{\pi^2}\right) & 0 \\ 0 & \frac{1}{2} \end{bmatrix},\quad Q_{12} = Q_{21}^T = \begin{bmatrix} \frac{1}{3\pi} & 0 & -\frac{1}{2\pi} \\ 0 & \frac{2}{3\pi} & 0\end{bmatrix}
\end{equation*}
\begin{equation*}
    \label{eq:example:covariance-q22-matrix}
    Q_{22} = \frac{1}{8} \begin{bmatrix}
        1 & 0 & -1 \\ 0 & 1 & 0 \\ -1 & 0 & 1
    \end{bmatrix}
\end{equation*}
The matrix $Q_{11}$ is invertible as it is a diagonal matrix with strictly nonnegative numbers. This invertibility means that we can find an $\demodulationsignal$ to estimate the relative gradient. Thus, we can apply gradient-based control laws to the vehicle. In contrast, the second-order perturbation covariance component $\covariance_{22}$ is not invertible. This is a result of the second-order perturbations not being linearly independent. Consequently, we are unable to incorporate second-order derivative information into the control laws. Hence, one such example signal for estimating the relative gradient is
\begin{equation}
    \label{eq:example:covariance-final}
    \demodulationsignal_{\covariance}(t,\ditheramplitude) = \begin{bmatrix}
        2\cos\left(\microphoneangle(t)\right)/\ditheramplitude\left(1 - \frac{8}{\pi^2}\right) \\
        2\sin\left(\microphoneangle(t)\right)/\ditheramplitude
    \end{bmatrix}
\end{equation}   
Although we can formulate $\demodulationsignal_{\covariance}$ to estimate the relative gradient, it is important to note that $\demodulationsignal_{\covariance}$ is not the only signal vector that can be used to estimate the gradient. We can decouple the demodulation signal from $\dithersignal$ by using a different signal for estimating the derivatives. One example is $r$ defined as
\begin{equation}
    \label{eq:example:crossvariance-dithers}
    r(t) = \begin{bmatrix}
        -\cos{(2\pi t)}\\
        -\cos(\pi t)
    \end{bmatrix}
\end{equation}
with the cross-variance matrix
\begin{equation}
    \label{eq:example:covariance-modification}
    R = \frac{1}{2}\int_{0}^2r(\tau)\Tilde{\rho}(\tau)^Td\tau = \begin{bmatrix}
        \frac{2}{3\pi} & 0\\
        0 & \frac{1}{2}
    \end{bmatrix}
\end{equation}  
as an invertible matrix. Therefore, we have found another signal
\begin{equation}
    \label{eq:example:crossvariance-final}
    \demodulationsignal_{\crossvariance}(t,\ditheramplitude) = \begin{bmatrix}
        (-3\pi/2a)\cos\left(2\pi t\right) \\
        (-2/a)\cos\left(\pi t\right) \\
    \end{bmatrix}
\end{equation}
which utilizes trigonometric functions rather than the piecewise functions of the sensor perturbations.

\subsection{Design of Controller}
    \label{sec:nonholonomic-vehicle:controller}

The proposed control law as shown in Fig.\ \ref{blk:example:rotating-microphone:block-diagram} is
\begin{align}
   \label{eq:simulations:seeker-x-y-position}
    \begin{bmatrix}
        \commanded{\forwardvelocity}(t)\\
        \commanded{\angularvelocity}(t)
    \end{bmatrix} &= \begin{bmatrix}
        \gain{\forwardvelocity} & 0 \\
        0 & \gain{\angularvelocity}
    \end{bmatrix}\demodulationsignal(t,a) \costfunction\left(r_s\right)
\end{align}
where
\begin{equation*}
    r_s = r_c + a\begin{bmatrix}
            \cos(\headingangle) & -\sin(\headingangle) \\ \sin(\headingangle) & \cos(\headingangle)
        \end{bmatrix}\dithersignal(\omega t)
\end{equation*}
 is the position of the sensor, $r_c = [x_c,\ y_c]^T$ is the center of the vehicle, and $\demodulationsignal$ can be either $\demodulationsignal_{\covariance}$ or $\demodulationsignal_{\crossvariance}$, and $r_0$ is a small parameter to avoid singularities. The reasoning behind this control law is simple; the vehicle should align itself with the gradient and move forward at a rate proportional to the projection of the gradient on the vehicle's forward axis. By moving the vehicle in a direction aligned with the gradient, $\costfunction$ increases, implying that the vehicle has moved closer towards the source. Showing the practical stability with this controller is beyond the scope of this work; we only wish to heuristically demonstrate the effects of different choices of $\demodulationsignal$ on the trajectories, as well as to show that designating local sensor exploration to a servo yields a better convergence speed of the vehicle towards the source.

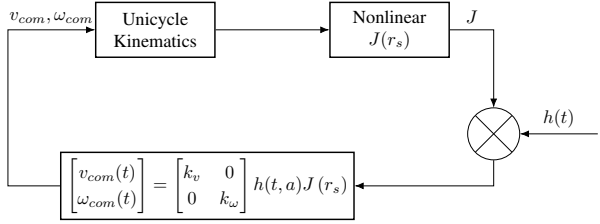
\begin{figure}[t!]
    \centering
    \resizebox{0.45\textwidth}{!}{
    \begin{tikzpicture}[auto, node distance=2cm,>=Latex]
    %---------------------------------------------------------------------------%
    % define the main nodes of the block diagram
    \node[text width=13ex,align= center] at (-0.5,0)[block] (vehicle) {Unicycle Kinematics};
    % cost function node
    \node[text width=13ex,align=center] at (4,0)[block] (cost) {Nonlinear $\costfunction(r_s)$};
    % sum node which is the convolutional filter
    \node at (6,-2) [sum] (convolution) {};
    % gradient node
    \node at (0.5,-3) [block] (control-law) {$   
            \begin{bmatrix}
        \commanded{\forwardvelocity}(t)\\
        \commanded{\angularvelocity}(t)
    \end{bmatrix} = \begin{bmatrix}
        \gain{\forwardvelocity} & 0\\
        0 & \gain{\angularvelocity}
    \end{bmatrix}\demodulationsignal(t,a) \costfunction\left(r_s\right)$};
    % define the input for the demodulation signal to be linked too
    \node [input,right of =convolution] (demodulation) {};     
    %---------------------------------------------------------------------------%
    %---------------------------------------------------------------------------%
    % draw all the lines for the block diagram 
    %drawing the lines for the convolutional filter which are angled at 45deg each
    \draw (6.35,-1.65) -- (5.65,-2.35);
    \draw (6.35,-2.35) --(5.65,-1.65);
    \draw[->] (vehicle) -- (cost);
    \draw[->] (cost.east) -| node [pos=0.25,above] {$\costfunction$}(convolution.north);
    \draw[->] (demodulation) -- node [midway,above] {$\demodulationsignal(t)$}(convolution.east);
    % \draw[->] (convolution) |- node [pos=0.75, above] {$\estimate{g}_r (\omega t,\estimate{r}_v,a)$} (gradient-estimate);
    \draw[->] (convolution) |- (control-law);
    % draw a line from the gradient-estimate block back to the dynamics using the ++ symbol which means to extend a line from the starting point (in this case gradient-estimate.west t0 -0.5cm then moves vertically and horizontally)
    \draw[->] (control-law.west) -- ++ (-1,0) |- (vehicle.west) node[pos=0.75,anchor=south]{$\commanded{\forwardvelocity}, \commanded{\angularvelocity}$};
\end{tikzpicture}
    }
    \caption{Block diagram for the nonholonomic unicycle gradient-based ESC}
    \label{blk:example:rotating-microphone:block-diagram}
\end{figure}

\begin{subsection}{Simulation With an Acoustic Source}

For the simulation results, we consider a simulated acoustic monopole as the source. This gives our sensor model as
\begin{multline}
    \label{eq:example:acoustic-spoof}
    \costfunction(r_s) = -20\log_{10}\left(\max\left\lbrace\left\Vert r_s - r^*\right\Vert, r_0\right\rbrace\right)  \\ + 20\log_{10}\left(S/\reference{\pressure}\right)
\end{multline}
where $\reference{\pressure} = 20$ $\mu$Pa is the reference pressure, $r^*$ is the position of the source, $S$ is an arbitrary number, and $r_0$ is a small number to avoid singularities. The $\costfunction(r_s)$ in \eqref{eq:example:acoustic-spoof} represents the sound pressure level of the sensor \cite[Eq.~B.5]{ref:salomons-2001}, where we set $S$ so that $\costfunction$ is $80$ decibels when the sensor is one meter away from the source. In Figs.~\ref{fig:demodulation-signal-distance-v-time} and \ref{fig:simulation:rotating-microphone-results-acoustic-source-math-sim}, the results show that the seeker using $\demodulationsignal_{\crossvariance}$, from \eqref{eq:proof:crossvariance-demodulation}, is able to converge to a closer neighborhood of the source compared to the seeker using $\demodulationsignal_{\covariance}$ from \eqref{eq:proof:demodulation-equation}.   

\begin{figure}[t]
    \centering
    \includegraphics[width=1\linewidth]{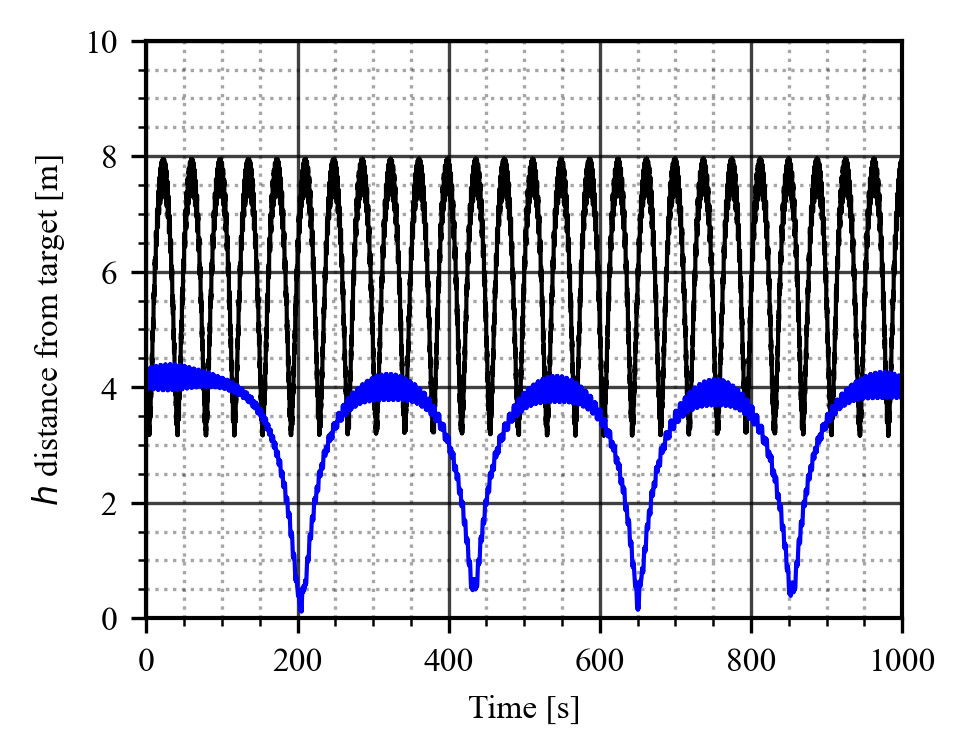}
    \caption{
        Demodulation signal measuring the euclidean distance over time for both the covariance and cross-variance examples.
        \\
        \textbf{KEY}: 
        $\demodulationsignal_{\covariance}=$ \protect\raisebox{1.5pt}{\protect\tikz \protect\draw[black, line width=3pt] (0,0) -- (0.5, 0);} ,
        $\demodulationsignal_{\crossvariance}=$ \protect\raisebox{1.5pt}{\protect\tikz \protect\draw[blue, line width =3pt] (0,0) -- (0.5,0);}.
    }
    \label{fig:demodulation-signal-distance-v-time}
\end{figure}

\begin{figure}[t]
    \centering
    \subfloat[]{
        \includegraphics[width=3.25in]{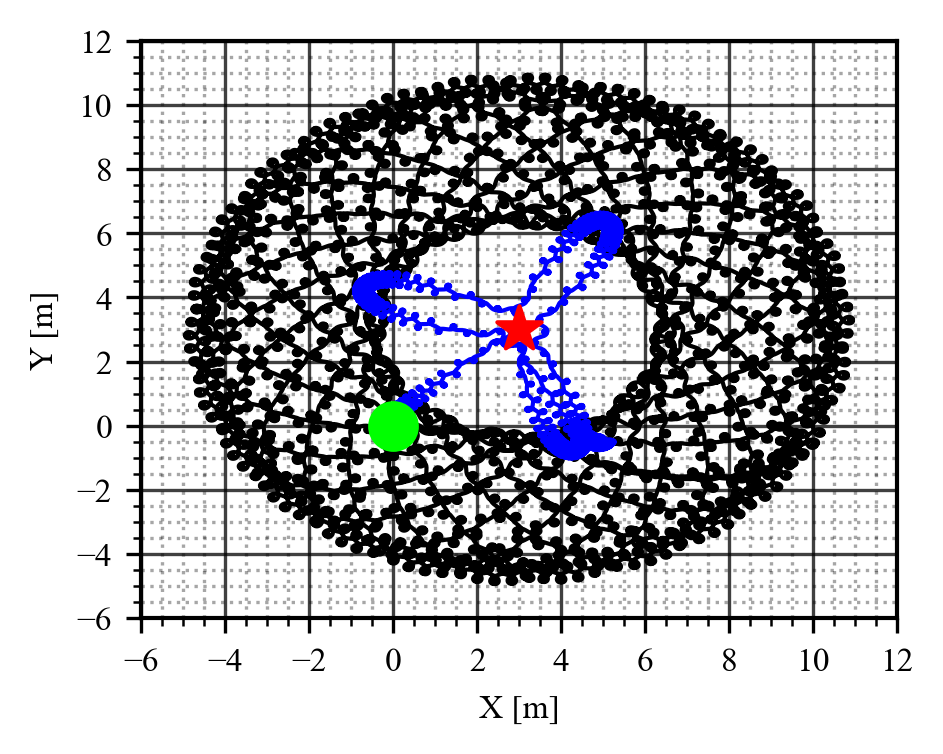}% 
        \label{subfig:example:seeker-path-covariance-crossvariance}
    }
    \hfill
    \subfloat[]{
        \includegraphics[width=3.25in]{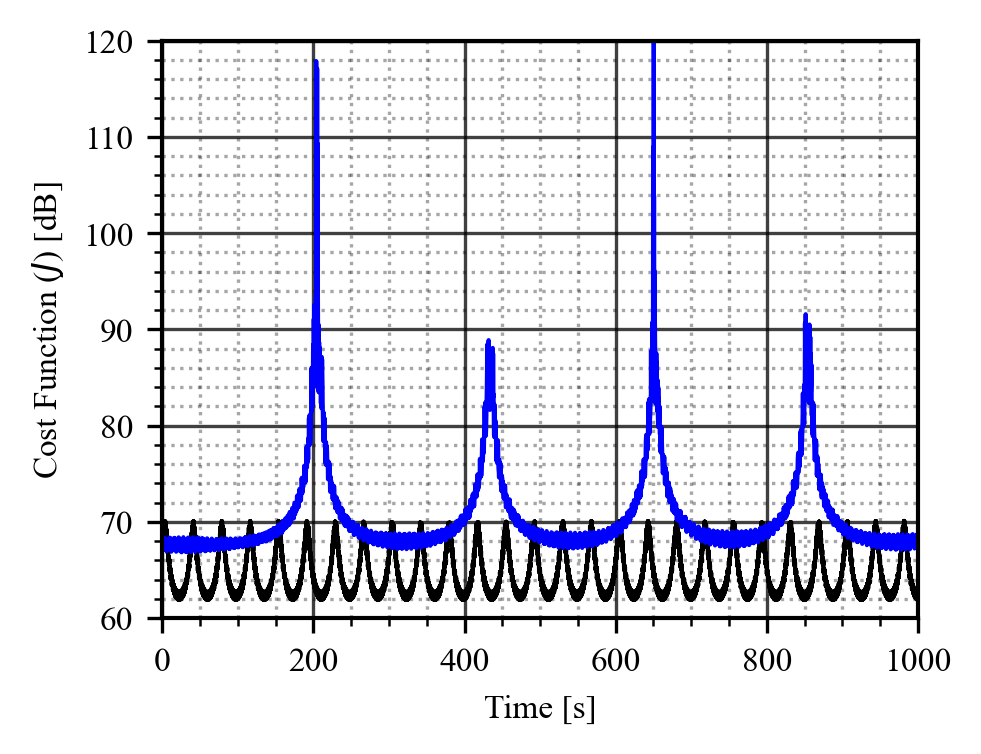}% 
        \label{subfig:example:cost-value-hq-hr}
    }
    \caption{
    Simulation results of the seekers with the acoustic source model. The trajectories show the results of using $\demodulationsignal_\covariance(t)$ and $\demodulationsignal_\crossvariance(t)$ with the same gains. The source is located at $r^* = (3,\ 3)^T$ and the gains are $\gain{\forwardvelocity} = 2e-4$ and $\gain{\angularvelocity} = 9.5e-3$. The large parameter selected is $\omega = 40$ rpm ($4\pi/3$ rad/s) and the dither amplitude is $\ditheramplitude = 0.154$ meters. Although both $\demodulationsignal$'s result in orbits of the source, the seeker using $\demodulationsignal_{\covariance}$  has an orbit bounded by a larger radius and remains farther away from the source when compared to the seeker using $\demodulationsignal_{\crossvariance}$.\\
    \textbf{KEY}: 
    $\demodulationsignal_\covariance =$ \protect\raisebox{1.5pt}{\protect\tikz \protect\draw[black, line width=3pt] (0,0) -- (0.5, 0);},
    $\demodulationsignal_\crossvariance =$ \protect\raisebox{1.5pt}{\protect\tikz \protect\draw[blue, line width =3pt] (0,0) -- (0.5,0);},
    Seeker initial position $=$ {\protect\tikz \protect\filldraw[green] (0,0) circle (4pt);},
    Source $=$ {\protect\tikz \protect\node[star, 
              star points=5, 
              star point height=0.225cm, 
              fill=red,
              draw,
              red,
              scale=0.5] {};
    }.
    }
    \label{fig:simulation:rotating-microphone-results-acoustic-source-math-sim}
\end{figure}
\end{subsection}

\subsection{Experimental Results Using a Light Source}

We performed experimental tests in both the Gazebo simulation toolbox and in a real-world laboratory setting using both the controller we developed earlier and several from the previous literature. The simulator was essential for verifying the effectiveness of the method and for selecting the proper controller parameters that do not lead to violations of the vehicle's wheel speed constraints. Once these parameters were established we implemented the control algorithm on the actual robot.

Instead of using an acoustic source, as in previous simulations (in Fig. \ref{fig:simulation:rotating-microphone-results-acoustic-source-math-sim}), we utilized a light source here. We used a photoresistor as the sensor and the cost function as the resistance across the photoresistor. For an ideal photoresistor, the resistance drops to zero as the light intensity increases and there is no directional effects, i.e.,\ the resistance across the photoresistor only depends on the light intensity and not the orientation of the photoresistor. However, for a real photoresistor, there are changes in resistance depending on the relative orientation of the sensor to the source, i.e.,\ the resistance is less when the photoresistor is pointed at the source than when it is pointing away from the source. We denote the angle between the forward direction of the photoresistor and the source by $\sensorangletosource$ (measured in degrees) and use the quadratic cost function
\begin{multline}
    \label{eq:experiments:quadratic-curve-fit}
    \costfunction(r_{s} - r^*,\sensorangletosource) = w_1\Vert r_{s} - r^*\Vert^2 + w_2\sensorangletosource^2 + w_3\Vert r_{s} - r^*\Vert\sensorangletosource \\ + w_4\Vert r_{s} - r^*\Vert + w_5\sensorangletosource + w_6
\end{multline}
with the weights $w_1,\ldots,w_6$ given in Table \ref{tab:experiments:coefficients-quadratic-curve} to represent the resistance across the photoresistor. The quadratic $\costfunction$ with the weights was found to give a reasonable fit of the measured resistance for the distance and orientations of the photoresistor relative to the light source. While $\costfunction$ is no longer just a function of the relative position between the source and the vehicle as was the case in the theoretical portion of this work, we have found that ESCs can still converge to the source. Lastly, note that unlike the acoustic source, the source seeking using resistance as a cost function is a minimization problem with controllers trying to minimize $\costfunction$ to move closer to the source.

\begin{table}[bt]
    \centering
    \caption{Coefficient values for \eqref{eq:experiments:quadratic-curve-fit}.}
    \label{tab:experiments:coefficients-quadratic-curve}
    \begin{tabular}{|c|c|}
    \hline
       \textbf{Coefficients}  & \textbf{Value} \\
       \hline
        $w_1$ & $8.28082113\times10^3$\\
        \hline
        $w_2$ & $7.90425287$ \\
        \hline
        $w_3$ & $4.42130406\times10^2$ \\
        \hline
        $w_4$ & $5.36416164\times10^{-13}$ \\
        \hline
        $w_5$ & $1.68542637\times10^{-16}$ \\
        \hline
        $w_6$ & $2.18515692\times10^{-18}$\\
        \hline
    \end{tabular}
\end{table}

After determining the cost function for the Gazebo simulation, we conducted several experiments on light source seeking. We compared the controllers from previous literature to our own, which rotates the sensor with a servo and uses $\demodulationsignal_{\covariance}$ to estimate the derivatives. The vehicle used was a TurtleBot3 Burger from ROBOTIS Robotics which has a maximum speed of approximately 0.2 meters per second. We tested 3 gradient-based extremum seekers from previous literature: a Lie Bracket seeker \cite{ref:stankovic-2011}; a forward velocity tuning seeker \cite{ref:cochran-2009-2}; and an angular velocity tuning seeker \cite{ref:zhang-2007}. For all controllers used, Table~\ref{tab:methods-testing-parameters} lists the controller parameters that successfully drove the vehicle to the source across a wide range of starting positions and orientations. The initial distance of the vehicle from the source ranged from $\approx$1.5 meters to $\approx$3.5 meters while the vehicle's initial orientation varied from pointing directly at the source to pointing directly away from the source. We only display the results for when the source is at $(2.5, 2.5)$ meters, i.e.\ roughly $3.5$ meters from the vehicle, as this was the hardest distance for convergence but the controllers were tuned over a variety of distances.

\begin{table*}[htb]
    \centering
    \caption{General parameters for the methods used for both simulation and laboratory testing}
    \label{tab:methods-testing-parameters}
    \begin{tabular}{|cS[table-format=1.1]|cS[table-format=1.1]|cS[table-format=1.1]|cS[table-format=1.1]|}
        \hline
        %% new line
        \multicolumn{2}{|c|}{\textbf{Lie Bracket}\cite{ref:stankovic-2011}} & 
        \multicolumn{2}{c|}{\textbf{Angular Tuning} \cite{ref:cochran-2009-2}} & 
        \multicolumn{2}{c|}{\textbf{Forward Tuning}\cite{ref:zhang-2007}} & 
        \multicolumn{2}{c|}{\textbf{Rotating Sensor}} 
        \\
        \hline
        %% new line
        \multicolumn{1}{|c|}{symbol} & \multicolumn{1}{c|}{value} & 
        \multicolumn{1}{c|}{symbol} & \multicolumn{1}{c|}{value} & 
        \multicolumn{1}{c|}{symbol} & \multicolumn{1}{c|}{value} & 
        \multicolumn{1}{c|}{symbol} & \multicolumn{1}{c|}{value}
        \\
        \hline
        %% new line
        \multicolumn{1}{|c|}{$\ditherfrequency$} & 0.18 &
        \multicolumn{1}{c|}{$\ditherfrequency$} & 3.0 &
        \multicolumn{1}{c|}{$\ditherfrequency$} & 1.6 &
        \multicolumn{1}{c|}{$\omega$} & 20.0
        \\
        \hline
        %% new line
        \multicolumn{1}{|c|}{$\commanded{\forwardvelocity}$} & 0.1 &
        \multicolumn{1}{c|}{$\commanded{\forwardvelocity}$} & 0.02 &
        \multicolumn{1}{c|}{$\commanded{\angularvelocity}$} & 0.15 &
        \multicolumn{1}{c|}{$\gain{v}$} & -0.05
        \\
        \hline
        %% new line
        \multicolumn{1}{|c|}{$h$} & 1.0 &
        \multicolumn{1}{c|}{$h$} & 0.05 &
        \multicolumn{1}{c|}{$h$} & 0.5 &
        \multicolumn{1}{c|}{$\gain{\angularvelocity}$} & -0.5
        \\
        \hline
        %% new line
        \multicolumn{1}{|c|}{$\gain{}$} & -6.0 &
        \multicolumn{1}{c|}{$\alpha$} & 0.3 &
        \multicolumn{1}{c|}{$\alpha$} & 0.1 &
        \multicolumn{1}{c|}{$\ditheramplitude$} & 0.154
        \\
        \hline
        %% new line
        \multicolumn{1}{|c|}{$\mu$} & 0.05 &
        \multicolumn{1}{c|}{$c$} & -80.0 &
        \multicolumn{1}{c|}{$c$} & -35.0 &
        \multicolumn{1}{c|}{} &
        \\
        \hline
        %% new line
        \multicolumn{1}{|c|}{} &  &
        \multicolumn{1}{c|}{$d$} & 0.0 &
        \multicolumn{1}{c|}{} & &
        \multicolumn{1}{c|}{} &
        \\
        \hline
    \end{tabular}
\end{table*}

The results of a representative light source seeking experiment conducted in both the Gazebo Simulation Toolbox and a real-world environment are shown in Fig. \ref{fig:results-comparison-plots}
\footnote{Different methods of the GESC light source seeking:\ \url{https://www.youtube.com/watch?v=YtoQkZg6jGk&list=PL5UdoAE0Vc9-6TPsQ9ZcHECyqSNZDug41&pp=gAQB}}.
For the results in Fig. \ref{fig:results-comparison-plots}, all seekers start at the origin with an initial heading angle of $\headingangle_0 = 0$ and the source is located at $r^* = (2.5, 2.5)$ meters. While all the seekers get close to the source, some begin to move away and others appear to settle into a close neighborhood of the source. From a qualitative perspective, the trajectories between the Gazebo simulations and the real-world experiments are fairly similar for the rotating sensor controller and the Lie Bracket controller. There are larger differences between the Gazebo simulations and real-world experiments for the angular velocity tuning controller and even more differences for the forward velocity tuning controller. However, all of these controllers seem to be consistent between the Gazebo simulations and real-world experiments for the time it takes them to get reasonably close to the source. This justifies using $\costfunction$ defined in \eqref{eq:experiments:quadratic-curve-fit} as it yields reasonable accurate simulations compared to real-world experiments.

To compare the controllers, we use reach time as the performance criterion to compare the seekers. The reach time is defined in this work as the first time the vehicle is within 0.5 meters of the source or collides with cabinets in the laboratory close to the source. We give reach times for various starting heading angles, $\headingangle_0$, in Table \ref{tab:method-convergence-performance-rate} with the vehicle again starting at the origin and the source at $r^* = (2.5, 2.5)$. Whenever all the controllers successfully drive the vehicle towards the source, our controller using a rotating sensor outperforms the other tested controllers. Both the angular velocity tuning controller and the rotating microphone controller tended to generate trajectories which bent towards the source, which helps to reduce the reach time since this is a more direct path to the source than those taken by the forward velocity tuning and Lie Bracket controllers. However, as the vehicle's initial heading deviates further from the source, the path required becomes more curved, leading to a longer reach time. Eventually the paths are required to bend too much and there are some initial heading angles in which the seeker fails to reach the source. This happened when the vehicle's forward axis is at an angle greater than $90^{\circ}$ relative to the source, e.g. $\headingangle_0=-90^{\circ}$ and $\headingangle_0=-135^{\circ}$. In these cases, the seeker would either fail to turn in the Gazebo simulations or collide with a wall in the real-world room. In contrast, since they are always turning the vehicle, both the forward velocity tuning controller and the Lie Bracket controller consistently reached the source, albeit at larger reach times because the vehicle had to perform the local exploration of $\costfunction$ for derivative estimates. Although failure to converge is likely unavoidable in the angular velocity tuning controller, the rotating sensor controller had a constraint on the servo arm angle. If this constraint were removed, then it is likely that the rotating sensor seeker would converge for these test cases and still maintain its faster reach times.

\begin{table*}[htb]
    \centering
    \caption{Reach Time in Seconds for Various Initial Heading Angles}
    \label{tab:method-convergence-performance-rate}
    \begin{tabular}{|l|cS||cS||cS||cS||cS||}
    \hline
    \multirow{2}{*}{\diagbox[innerwidth = 2.8cm, height = 6ex]{\textbf{Method}}{\textbf{Angle ($\headingangle_0$)}} } & 
    \multicolumn{2}{c||}{\textbf{45°}} & 
    \multicolumn{2}{c||}{\textbf{0°}} & 
    \multicolumn{2}{c||}{\textbf{-45°}} & 
    \multicolumn{2}{c||}{\textbf{-90°}} & 
    \multicolumn{2}{c||}{\textbf{-135°}}           
    \\ 
    % \cline{2-11}
    %% new line
    & \multicolumn{1}{c}{Gazebo} & \multicolumn{1}{c||}{Lab} &
    \multicolumn{1}{c}{Gazebo} & \multicolumn{1}{c||}{Lab} &
    \multicolumn{1}{c}{Gazebo} & \multicolumn{1}{c||}{Lab} &
    \multicolumn{1}{c}{Gazebo} & \multicolumn{1}{c||}{Lab} &
    \multicolumn{1}{c}{Gazebo} & \multicolumn{1}{c||}{Lab}                         
    \\ 
    \hline
    %% new line
    Forward Tuning   &
    \multicolumn{1}{S|}{915}    & 363 &
    \multicolumn{1}{S|}{950}   & 893 &
    \multicolumn{1}{S|}{975}    & 828 &
    \multicolumn{1}{S|}{1500}   & 883 &
    \multicolumn{1}{S|}{1200}   & 425 
    \\ 
    \hline
    %% new line
    Angular Tuning   &
    \multicolumn{1}{S|}{180}    & 189 &
    \multicolumn{1}{S|}{185}    & 170 &
    \multicolumn{1}{S|}{185}    & 218 &
    \multicolumn{1}{S|}{250}    & \multicolumn{1}{c||}{N/A} &
    \multicolumn{1}{c|}{N/A}      & \multicolumn{1}{c||}{N/A}
    \\ 
    \hline
    %% new line
    Lie Bracket     &
    \multicolumn{1}{S|}{420}    & 332 &
    \multicolumn{1}{S|}{420}    & 343 &
    \multicolumn{1}{S|}{620}    & 300 &
    \multicolumn{1}{S|}{690}    & 302 &
    \multicolumn{1}{S|}{650}    & 487
    \\
    \hline
    %% new line
    Rotating Sensor  &
    \multicolumn{1}{S|}{\bfseries 75}     & \bfseries 70  &
    \multicolumn{1}{S|}{\bfseries 75}     & \bfseries 75  &
    \multicolumn{1}{S|}{\bfseries 82}     & \bfseries 110 &
    \multicolumn{1}{S|}{\bfseries 128}    & \multicolumn{1}{c||}{N/A} &
    \multicolumn{1}{c|}{N/A}      & \multicolumn{1}{c||}{N/A} 
    \\
    \hline
    \end{tabular}
\end{table*}

\begin{figure*}[t!]
    \centering
    \subfloat[]{
        \includegraphics[width=3.25in]{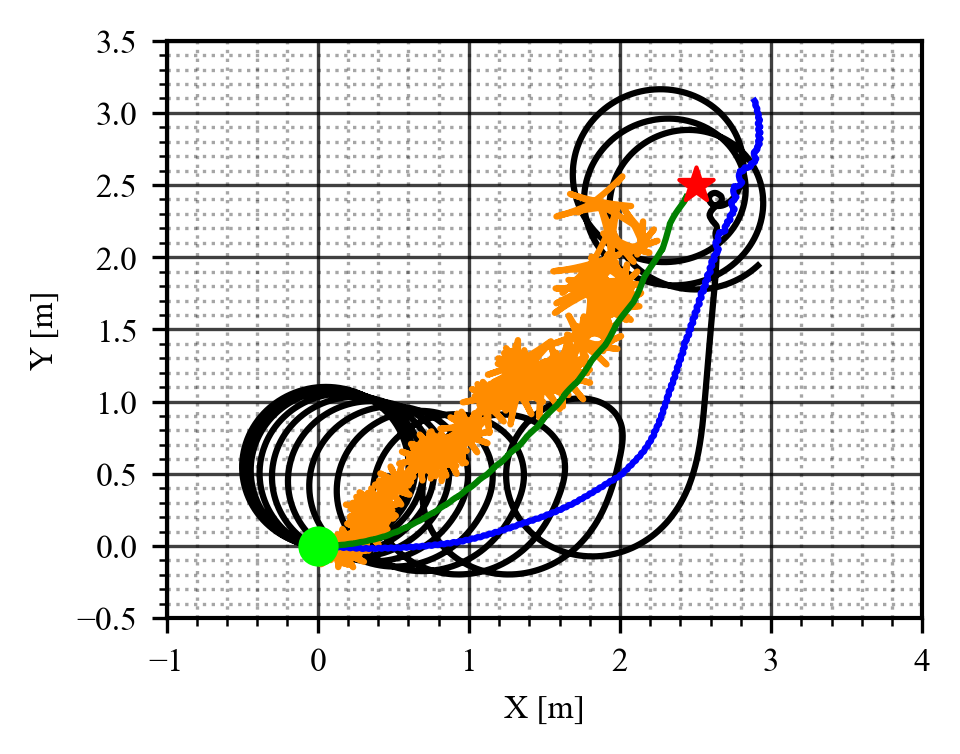}% 
        \label{subfig:results-comparison-plots:gazebo-trajectories}
    }
    \hfill
    \subfloat[]{
        \centering
        \includegraphics[width=3.25in]{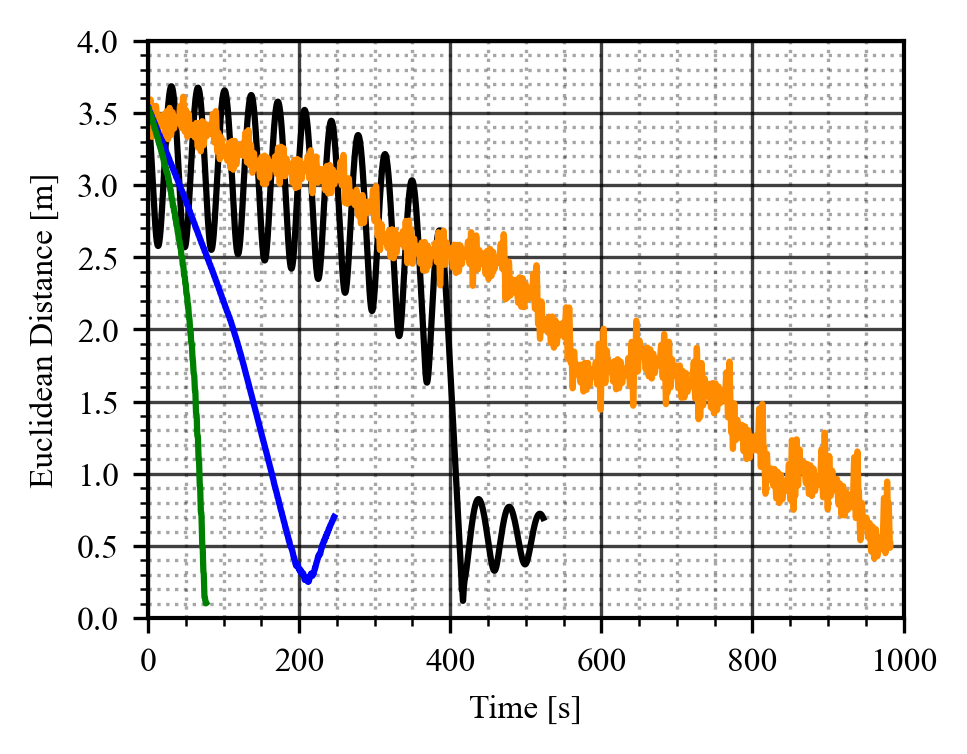}% 
        \label{subfig:results-comparison-plots:gazebo-euclidean-distance}
    } \\
    \subfloat[]{
        \includegraphics[width=3.25in]{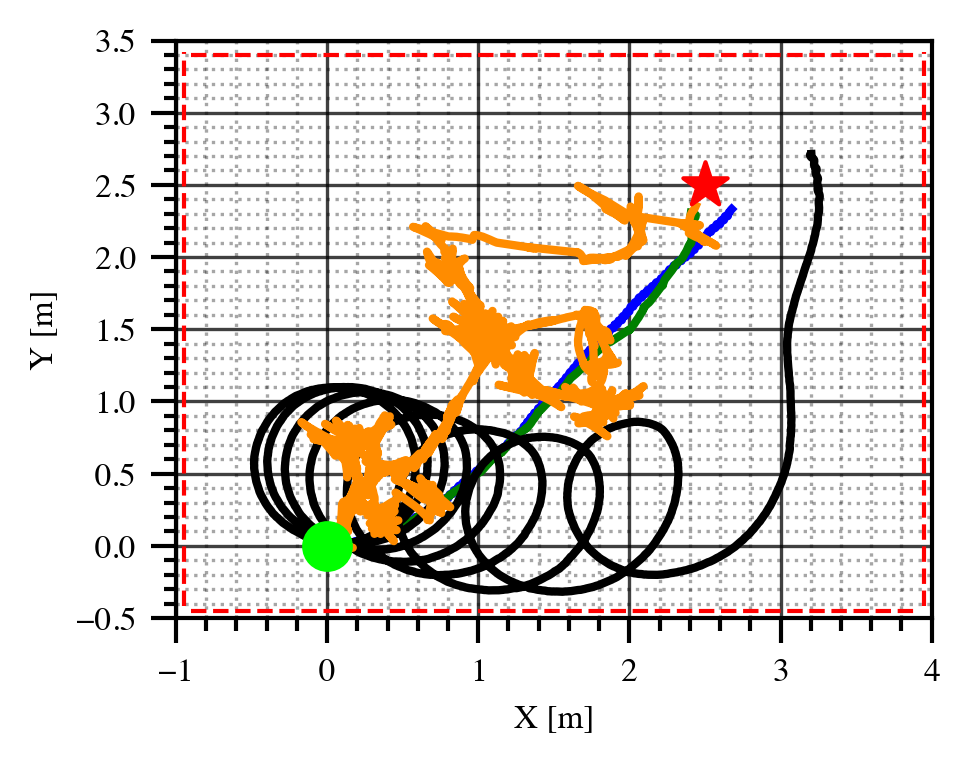}% 
        \label{subfig:results-comparison-plots:lab-trajectories}
    }
    \hfill
    \subfloat[]{
        \includegraphics[width=3.25in]{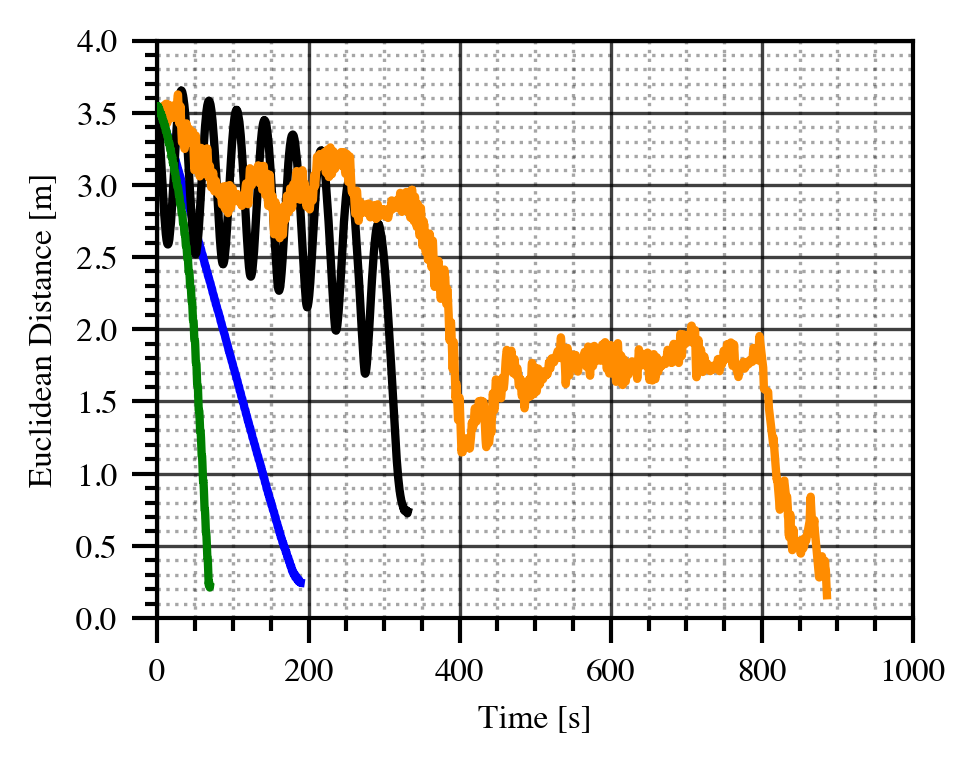}% 
        \label{fig:results-comparison-plots:lab-euclidean-distance}
    }
    \caption[Experimental comparison plots]{Experimental results between different ESC designs. The methods being compared are Lie Bracket \cite{ref:stankovic-2011}, angular tuning from \cite{ref:cochran-2009}, forward tuning from  \cite{ref:zhang-2007}, and our method which is the rotating sensor design. Subfigures (a) and (c) are the trajectories over time while (b) and (d) are the Euclidean distance of the seeker from the source over time. Subfigures (a) and (b) are the trajectories in the Gazebo simulation toolbox while Subfigures (c) and (d) correspond to the real-world experiments.
    \textbf{KEY}: 
    Lie Bracket $=$ \protect\raisebox{1.5pt}{\protect\tikz \protect\draw[black, line width=3pt] (0,0) -- (0.5, 0);},
    Angular tuning $=$ \protect\raisebox{1.5pt}{\protect\tikz \protect\draw[blue, line width =3pt] (0,0) -- (0.5,0);},
    Forward tuning $=$ \protect\raisebox{1.5pt}{\protect\tikz \protect\draw[orange, line width =3pt] (0,0) -- (0.5,0);},
    Rotating sensor \protect\raisebox{1.5pt}{\protect\tikz \protect\draw[black!60!green, line width =3pt] (0,0) -- (0.5,0);},
    Room Walls $=$ \protect\raisebox{1.5pt}{\protect\tikz\protect\draw[red,dashed,line width = 3pt](0,0) -- (0.5, 0);},
    Seeker initial position $=$ {\protect\tikz \protect\filldraw[green] (0,0) circle (4pt);},
    Source $=$ {\protect\tikz \protect\node[star, 
              star points=5, 
              star point height=0.225cm, 
              fill=red,
              draw,
              red,
              scale=0.5] {};
    }.
     }
    \label{fig:results-comparison-plots}
\end{figure*}

%% conclusion
\section{Conclusions}
    \label{sec:conclusion}
    In this work, we introduce a methodology for estimating derivatives under the assumption of periodic or nearly periodic dither signals, which facilitates the stability analysis of model-free ESCs. Specifically, we establish a necessary and sufficient condition for the existence of time-varying functions which when multiplied by the sensor output result in perturbation-based derivative estimates. We showed that these perturbation-based derivative estimates will converge to the true derivative values after averaging and in the limit of $\ditheramplitude\to0$. 

The usefulness of this result is demonstrated in Section \ref{sec:nonholonomic-ground-vehicle-ex}, where we analyze the existence of a time-varying function $\demodulationsignal$ to estimate the gradient in the relative vehicle frame based on readings from a sensor being actuated by a servo. We present simulations with various $\demodulationsignal$'s to illustrate how the choice of $\demodulationsignal$ leads to different behavior for source seeking controllers. Lastly, we show that a source seeking controller utilizing our results and actuating a sensor on a servo tends to have better performance than the state-of-the-art results in previous literature.
%% appendix
\appendix
\section{Equivalence of Previous Literature}
    \label{sec:appendix}
    To show that the estimation signals of \cite{ref:mills-2018, ref:rusiti-2016} for estimating the $m$-th order derivative of scalar maps can be derived from \cite{ref:nesic-2010-2}, we need to show that following \cite{ref:nesic-2010-2} (and thus the main results of our work) results in 
\begin{multline}
    \demodulationsignal_m(\tau, a) = \frac{2^m m!}{a^m}\left(-1\right)^{\left(\frac{m - \left\vert \sin\left(\frac{m\pi}{2}\right) \right\vert}{2}\right)}\\ \cdot\sin\left(m\tau + \frac{\pi}{4}\left(1 + (-1)^m\right)\right)
\end{multline}
or equivalently
\begin{equation}
    \demodulationsignal_m(\tau, a) = \frac{2^m m!}{a^m}
    \begin{cases}
        \left(-1\right)^{(m-1)/2} \sin\left(m\tau\right) & m\text{ is odd} \\
        \left(-1\right)^{m/2} \cos\left(m\tau\right) & \text{otherwise}
    \end{cases}
\end{equation}
Note that in our work, the scalar sinusoidal perturbation is given as $\dithersignal(\ditherfrequency t) = \sin(\ditherfrequency t)$. Using the exponential form $\sin(\tau) = (e^{j\tau} - e^{-j\tau})/(2 j)$ where $j$ here is $\sqrt{-1}$, we can calculate the extended perturbation signal for the $i$-th order derivative as $\extendeddithersignal_{i}(\tau) = \sin^{i}(\tau)$ for $i=0,1,\ldots,m$ where
\begin{multline}
        \sin^{i}(\tau) = \left(\frac{1}{2j}\right)^i\left\lbrace\left[\sum_{k=0}^{i/2} 2 \binom{i}{k} \left(-1\right)^{i-k} \cos((i-2k)\tau) \right] \right. \\ \left. + \left(-1\right)^{i/2}\binom{i}{i/2}\right\rbrace
\end{multline}
if $i$ is even and 
\begin{multline}
        \sin^{i}(\tau) = \left(\frac{1}{2j}\right)^{i-1} \\ \cdot\left[\sum_{k=0}^{(i-1)/2} \binom{i}{k} \left(-1\right)^{i-k-1} \sin((i-2k)\tau) \right]
\end{multline}
if $i$ is odd. The advantage of these explicit forms for the powers of $\sin$  is that it allows us to immediately see that $\extendeddithersignal(\tau)$ is a vector of linearly independent signals since each $\sin^{i}(\tau)$ contributes a unique $\sin(i \tau)$ or $\cos(i \tau)$ component compared to $\extendeddithersignal_k(\tau)$ associated with lower derivative values ($k < i$). We can determine an orthogonal vector signal $\extendeddithersignal_{\perp}(\tau)$ where
\begin{equation}
    \extendeddithersignal_{\perp,i}(\tau) = \begin{cases}
        1 & \text{ if } i = 0 \\
        \sin(i \tau) & \text{ if } i \text{ is odd} \\
        \cos(i \tau) & \text{otherwise}
    \end{cases}
\end{equation}
using a lower triangular matrix $G$ so that $\extendeddithersignal_{\perp}(\tau) = G\extendeddithersignal(\tau)$ with the diagonal elements of $G$ being
\begin{equation}
    G_{ii} = \begin{cases}
        1 & \text{ if } i=0 \\
        2^{i-1} (-1)^{(i-1)/2} & \text{ if } i \text{ is odd} \\
        2^{i-1} (-1)^{i/2} & \text{ otherwise}
    \end{cases}
\end{equation}
and every other element in the lower diagonal part of $G$ being zero due to the orthogonality of sine and cosine. As none of the diagonal elements are zero, $G$ is invertible. Now consider averaging the perturbation-based derivative estimates for a polynomial
\begin{align}
    \label{eq:equivalence:polynomial-average-derivative estimate}
    \avgest{\derivativevalue}(\avgest{\parameter},\ditheramplitude) &{}={} A^{-1} \covariance^{-1}\left(\lim_{T\to\infty}\frac{1}{T}\int_0^T \extendeddithersignal(\tau) \extendeddithersignal(\tau)^T d\tau\right) A \derivativevalue(\avgest{\parameter}) \\
    &= A^{-1} \covariance^{-1} G^{-1} \covariance_{\perp} G^{-T} A \derivativevalue(\avgest{\parameter})
\end{align}
where 
\begin{align}
    Q_{\perp} &= \lim_{T\to\infty}\frac{1}{T}\int_0^T \extendeddithersignal_{\perp}(\tau) \extendeddithersignal_{\perp}(\tau)^T d\tau \\
    &= \diagonal\left\lbrace 1,\underbrace{\frac{1}{2},\ldots,\frac{1}{2}}_{m\text{-times}}\right\rbrace
\end{align}
is a strictly diagonal matrix. We know from Theorem~\ref{thm:existence-of-consistent-derivative-estimation-signals} that the LHS and RHS of \eqref{eq:equivalence:polynomial-average-derivative estimate} are equal to $\derivativevalue(\avgest{\parameter})$ and so $\covariance^{-1}~=~G^T \covariance_{\perp}^{-1} G$. With this knowledge, reexamining the covariance-based estimation signal yields
\begin{equation}
    \demodulationsignal(\tau,\ditheramplitude) = A^{-1} G^T \covariance_{\perp}^{-1} \extendeddithersignal_{\perp}(\tau)
\end{equation}
with $A_{ii} = \ditheramplitude^{i}/i!$. Since $\demodulationsignal$ is product of diagonal matrices $A^{-1}$ and $\covariance_{\perp}^{-1}$ as well as the upper block diagonal matrix $G^T$, it is easy to verify that
\begin{equation*}
    \demodulationsignal_m(\tau, a) = \frac{2^m m!}{a^m}
    \begin{cases}
        \left(-1\right)^{(m-1)/2} \sin\left(m\tau\right) & m\text{ is odd} \\
        \left(-1\right)^{m/2} \cos\left(m\tau\right) & \text{otherwise}
    \end{cases}
\end{equation*}
which is what we needed to show. Thus, the derivative estimates in \cite{ref:mills-2018, ref:rusiti-2016} for estimating the $m$-th order derivative of scalar maps were derivable from \cite{ref:nesic-2010-2}.

% \addtolength{\textheight}{-12cm}   % This command serves to balance the column lengths
% on the last page of the document manually. It shortens
% the textheight of the last page by a suitable amount.
% This command does not take effect until the next page
% so it should come on the page before the last. Make
% sure that you do not shorten the textheight too much.

%%%%%%%%%%%%%%%%%%%%%%%%%%%%%%%%%%%%%%%%%%%%%%%%%%%%%%%%%%%%%%%%%%%%%%%%%%%%%%%%

\bibliographystyle{IEEEtran}
\bibliography{references}

%\addtolength{\textheight}{-8cm}   % This command serves to balance the column lengths
% on the last page of the document manually. It shortens
% the textheight of the last page by a suitable amount.
% This command does not take effect until the next page
% so it should come on the page before the last. Make
% sure that you do not shorten the textheight too much.

\begin{IEEEbiography}[{\includegraphics[width=1in,height=1.25in,clip,keepaspectratio]{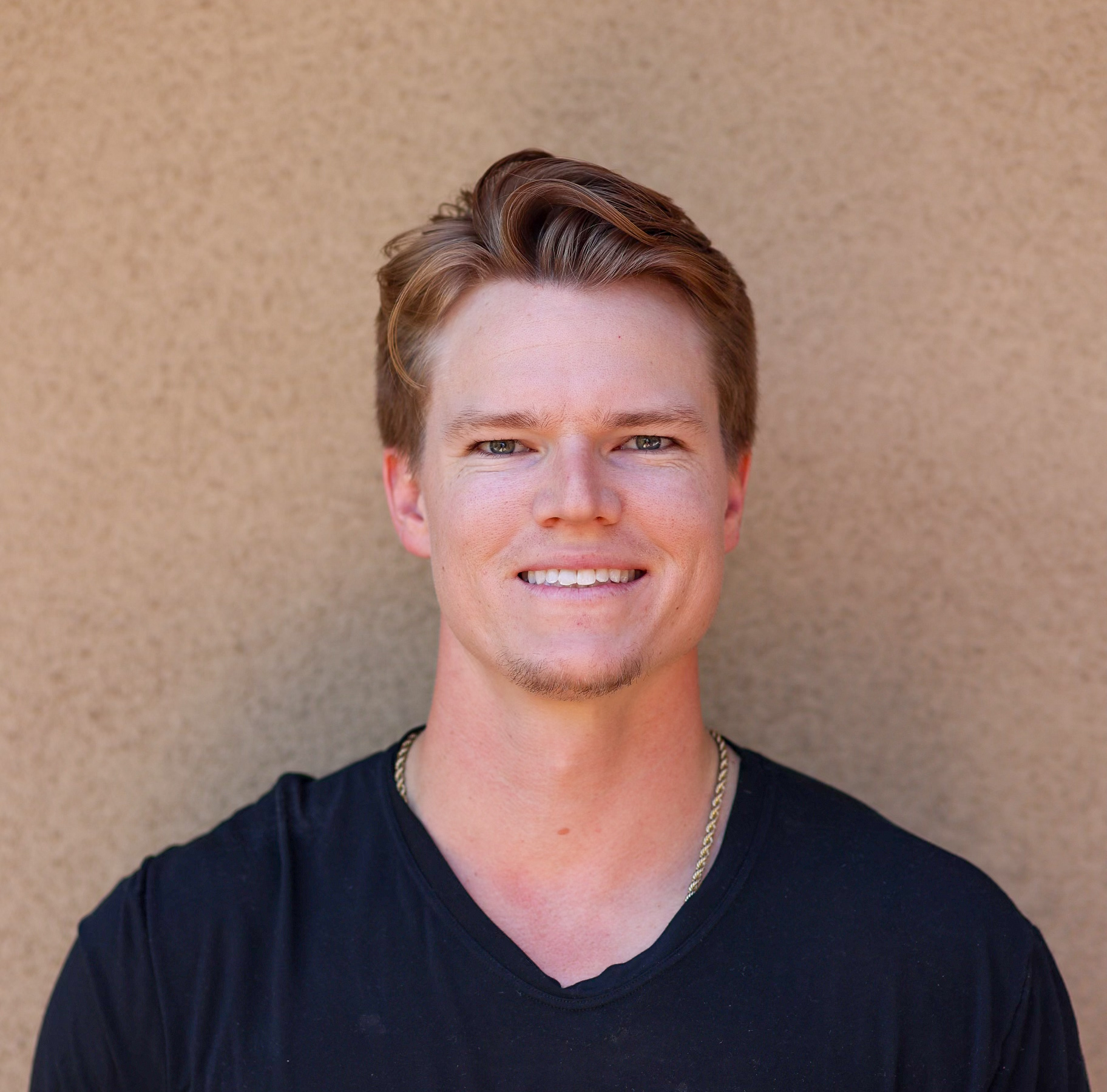}}]{Dylan James-Kavanaugh}
    received the B.S. degree in Mechatronic Engineering from California State University, Chico, Chico, CA, USA, in 2018. Following graduation, he worked at Honeywell Inc. as a field engineer, developing skills in control design for building automation. In 2021, he returned to academia and has been a researcher at the Dynamic Systems and Intelligent Machines Lab (DSIM) since 2023. He is currently obtaining an M.S. degree in Mechanical Engineering from San Diego State University, San Diego, CA, USA, which he is expected to complete in 2025. He is presently focused on pursuing a career in robotics.
\end{IEEEbiography}

\begin{IEEEbiography}[{\includegraphics[width=1in,height=1.25in,clip,keepaspectratio]{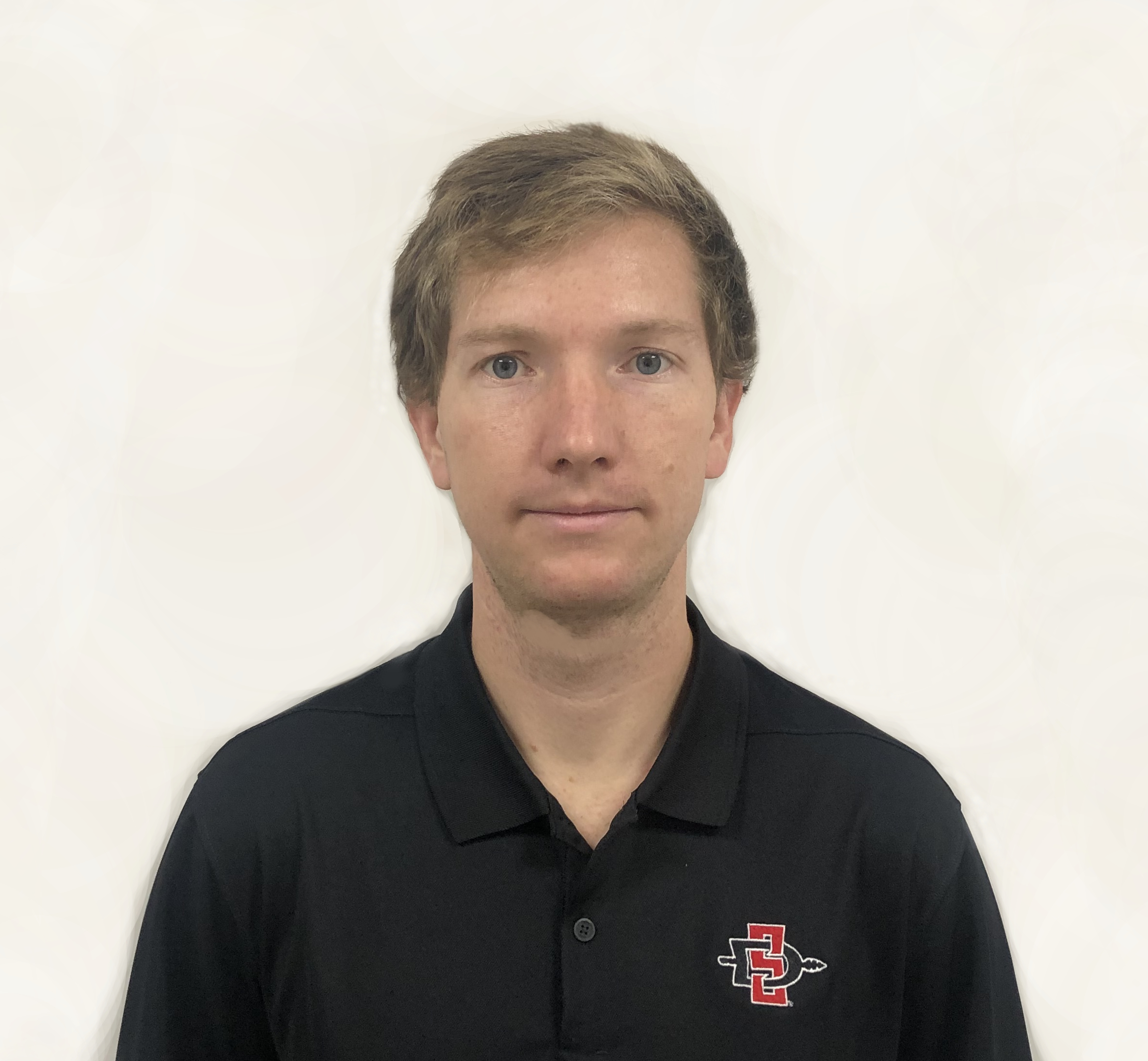}}]{Patrick McNamee}
    received the B.S. degree in aerospace engineering from the University of Kansas, Lawrence, KS, USA, in 2017 before receiving a M.Sc. degree in aerospace engineering and a M.Sc. degree in computer science from the University of Kansas in 2020 and 2021, respectively.
    
    From 2016 to 2021, he was with the University of Kansas as a undergraduate and graduate student research, studying the dynamics and controls of unmanned aerial vehicles (UAVs) as well as machine learning applications for UAVs. He is currently a graduate student researcher with the Dynamic Systems and Intelligent Machines Lab (DSIM) at San Diego State University, having started in 2021.
\end{IEEEbiography}

\begin{IEEEbiography}[{\includegraphics[width=1in,height=1.25in,clip,keepaspectratio]{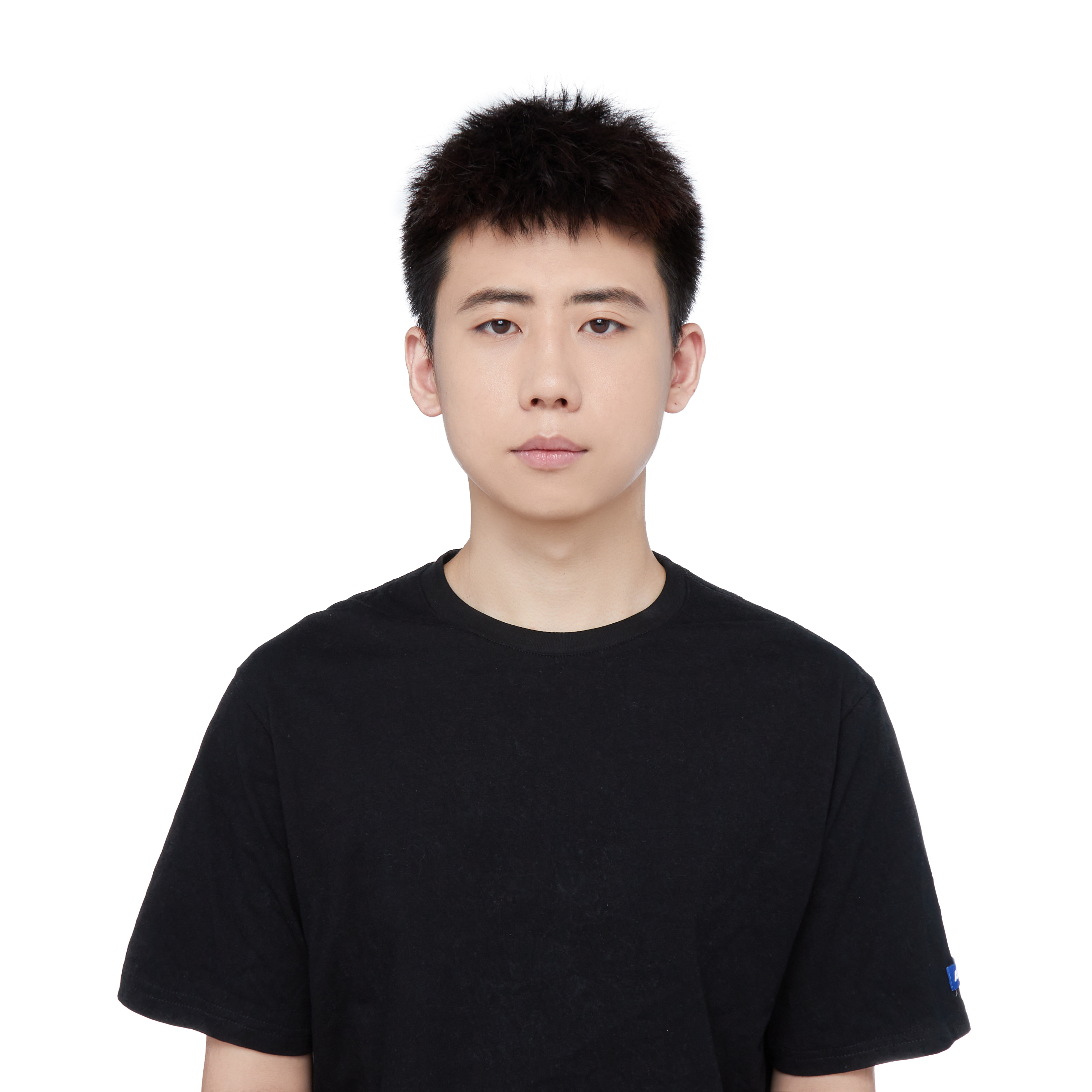}}]{Qixu Wang}
    % TODO add about me
    received the B.Sc. degree in Process Equipment and Control Engineering from the Dalian University of Technology, Dalian, Liaoning, China, in 2019 and  received a M.Sc. degree in Mechanical Engineering from the Northeastern University, Boston, MA, USA, in 2021. He is currently a graduate student researcher with the Dynamic Systems and Intelligent Machines Lab (DSIM) at San Diego State University, having started in 2022.
\end{IEEEbiography}

\begin{IEEEbiography}[{\includegraphics[width=1in,height=1.25in,clip,keepaspectratio]{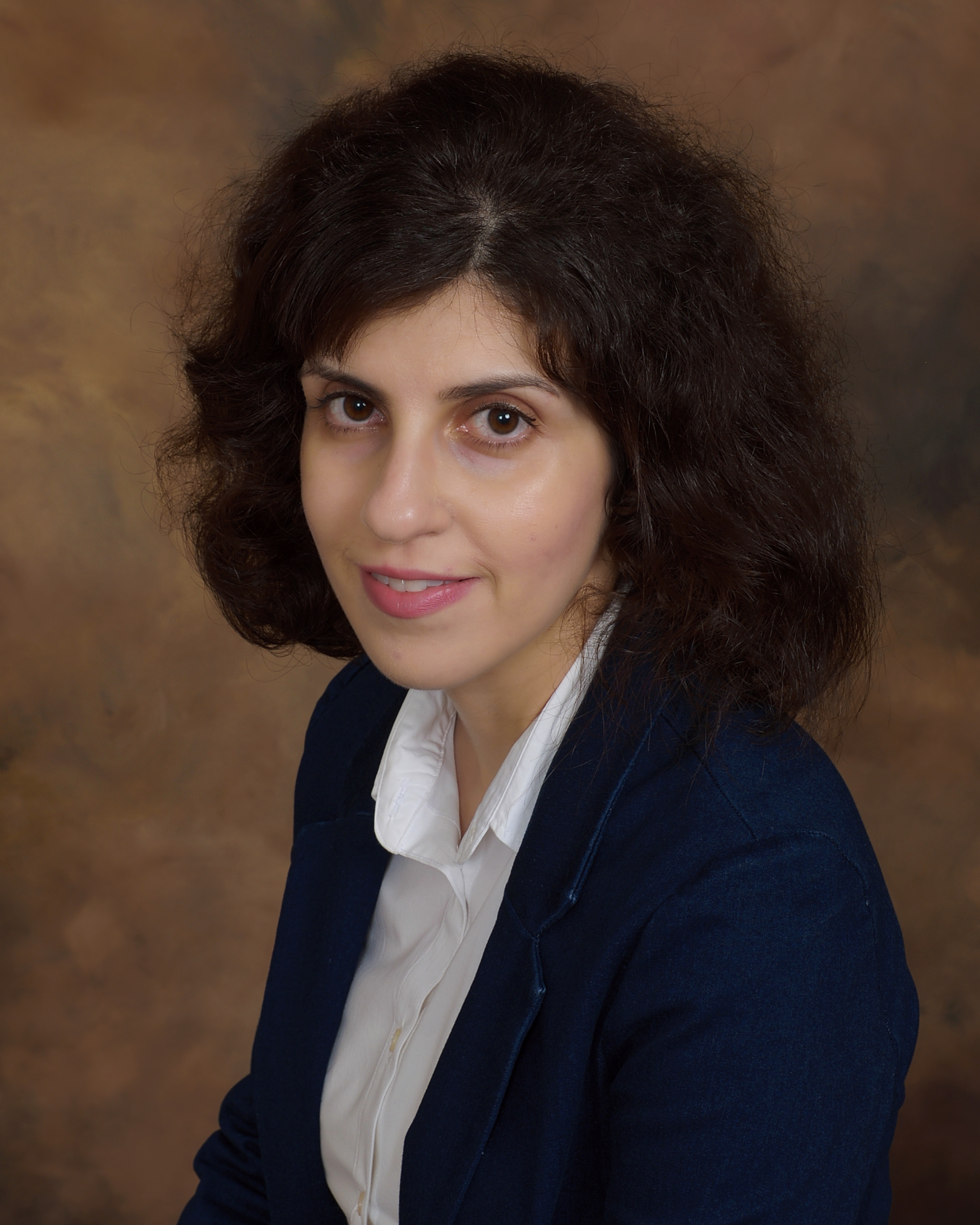}}]{Dr. Zahra Nili Ahmadabadi}
    Zahra Nili Ahmadabadi is an assistant professor in the Mechanical Engineering Department at San Diego State University (SDSU). She is a recipient of the ASME rising star award and ARO Early career award. Nili is currently the associate editor of Sage Journal of Vibration and Control. Her current research interests include robot perception and control, and nonlinear systems.
\end{IEEEbiography}

\end{document}